\def\BibTeX{{\rm B\kern-.05em{\sc i\kern-.025em b}\kern-.08em
    T\kern-.1667em\lower.7ex\hbox{E}\kern-.125emX}}
\theoremstyle{definition}
\newtheorem{theorem}{\normalfont\bfseries Theorem}
\newtheorem{lemma}{\normalfont\bfseries Lemma}
\newtheorem{example}{\normalfont\bfseries Example}
\newtheorem{definition}{\normalfont\bfseries Definition}
\newtheorem{assumption}{\normalfont\bfseries Assumption}
\newcommand{\bx}{\mathbf{x}}
\newcommand{\by}{\mathbf{y}}
\newcommand{\bzero}{\mathbf{0}}
\renewcommand{\bf}{\mathbf{f}}
\newcommand{\bg}{\mathbf{g}}
\newcommand{\bu}{\mathbf{u}}
\newcommand{\bk}{\mathbf{k}}
\newcommand{\s}{\mathcal{S}}
\newcommand{\Sb}{\mathcal{S}_\mathrm{b}}
\newcommand{\SI}{\mathcal{S}_{\mathrm{I}}}
\newcommand{\Sns}{\mathcal{S}_{\mathrm{ns}}}
\newcommand{\R}{\mathbb{R}}
\newcommand{\Rn}{\mathbb{R}^n}
\newcommand{\Rm}{\mathbb{R}^m}
\newcommand{\Rp}{\mathbb{R}^p}
\newcommand{\hb}{h_\mathrm{b}}
\newcommand{\kd}{\mathbf{k}_{\mathrm{d}}}
\newcommand{\kb}{\mathbf{k}_\mathrm{b}}
\newcommand{\fb}{\mathbf{f}_\mathrm{b}}
\newcommand{\hkb}{\hat{\mathbf{k}}_\mathrm{b}}
\newcommand{\U}{\mathcal{U}}
\newcommand{\K}{\mathcal{K}}
\newcommand{\Ke}{\K^{\rm e}}
\newcommand{\boeta}{\boldsymbol{\eta}}
\newcommand{\bphi}{\boldsymbol{\Phi}}
\newcommand{\phib}{\boldsymbol{\varphi}_{\mathrm{b}}}
\newcommand{\bA}{\mathbf{A}}
\newcommand{\bP}{\mathbf{P}}
\newcommand{\bQ}{\mathbf{Q}}
\newcommand{\bI}{\mathbf{I}}
\newcommand{\kfl}{\bk_{\mathrm{FL}}}
\newcommand{\bK}{\mathbf{K}}
\begin{document}

\title{


 Braking within Barriers: Constructive Safety-Critical Control for Input-Constrained Vehicles via the Backup Set Method

%
\author{\IEEEauthorblockN{
Laszlo Gacsi,
Adam K. Kiss, and
Tamas G. Molnar
}
\thanks{This paper was supported by the János Bolyai Research Scholarship of the Hungarian Academy of Sciences and by the HUN-REN Hungarian Research Network.}%
\thanks{L. Gacsi and T. G. Molnar are with the Department of Mechanical Engineering, Wichita State University, Wichita, KS 67260, USA, {\tt\small lxgacsi@shockers.wichita.edu, tamas.molnar@wichita.edu}.}%
\thanks{Adam K. Kiss is with the HUN-REN–BME Dynamics of Machines Research Group, Department of Applied Mechanics, Budapest University of Technology and Economics, Budapest, Hungary, {\tt\small kiss\_a@mm.bme.hu}.}%
}
}



\maketitle

\begin{abstract}
This paper presents a safety-critical control framework to maintain bounded lateral motions for vehicles braking on asymmetric surfaces.
We synthesize a brake controller that assists drivers and guarantees safety against excessive lateral motions (i.e., prevents the vehicle from spinning out) while minimizing the stopping distance.
We address this safety-critical control problem in the presence of input constraints, since braking forces are limited by the available friction on the road.
We use backup control barrier functions for safe control design.
As this approach requires the construction of a backup set and a backup controller, we propose a novel, systematic method to creating valid backup set-backup controller pairs based on feedback linearization and continuous-time Lyapunov equations.
We use simple examples to demonstrate our proposed safety-critical control method.
Finally, we implement our approach on a four-wheel vehicle model for braking on asymmetric surfaces and present simulation results.
\end{abstract}

\def\abstractname{Note to Practitioners}
\begin{abstract}
This work is motivated by the challenge of braking vehicles on asymmetric surfaces where the left and right wheels experience different friction.
In such scenarios, ensuring that the vehicle does not spin out is as important as minimizing the stopping distance.
Existing techniques address this problem by allocating the braking forces among the wheels so as to stabilize the lateral motion, which may lead to increased stopping distance. In this paper, we propose a safety-critical control approach that constrains the lateral motions instead of stabilizing them, while accounting for the fact that braking forces are limited by the available friction.
Our approach provides the required braking forces that minimize the stopping distance while keeping the vehicle safe from spinning out.
For practitioners, our main contribution is a constructive, step-by-step method for designing safe controllers that rigorously handle physical actuator limits.
By combining established control theories in a novel way, our approach automates the design of these controllers, which previously relied on system-specific intuition.
A limitation of the method is that it requires an operating point where the controls are not saturated. For the asymmetric braking problem, this means that the driver's steering angle changes slowly.
\end{abstract}

\begin{IEEEkeywords}
Vehicle control, split-$\mu$ braking, safety-critical control, input constraint, backup control barrier functions.
\end{IEEEkeywords}

\section{Introduction}

\IEEEPARstart{V}{ehicle} control systems have been designed to improve safety in a wide variety of driving conditions.
One of the most dangerous and intricate situations is braking on asymmetric, so-called split-$\mu$ surfaces, where the left and right wheels of road vehicles experience different coefficients of friction.
This scenario, illustrated in Fig. \ref{fig:musplitvehicle}, typically occurs for wet or frozen road sections, or when miscellaneous substances are present on the road.
The resulting difference in left and right braking forces generates an unexpected yaw moment which may find most drivers unprepared and may cause accidents if drivers react with inappropriate steering movements.

The challenge of safe split-$\mu$ braking arises from two contradictory control objectives: stopping over the shortest possible distance while maintaining lateral stability \cite{mirzaeinejad2014optimization}.
Minimizing the stopping distance requires maximizing the braking forces, up to the available road friction.
As the friction differs on the left and right sides, maximal braking forces would generate maximal yaw moment.
On the other hand, to eliminate the yaw moment and ensure lateral stability, both sides of the vehicle must brake equally.
This requires reducing the braking forces on one side, which increases the stopping distance.
Both of these approaches may endanger the vehicle and its driver.

\begin{figure}[t!]
\centerline{\includegraphics[width=0.458\textwidth]{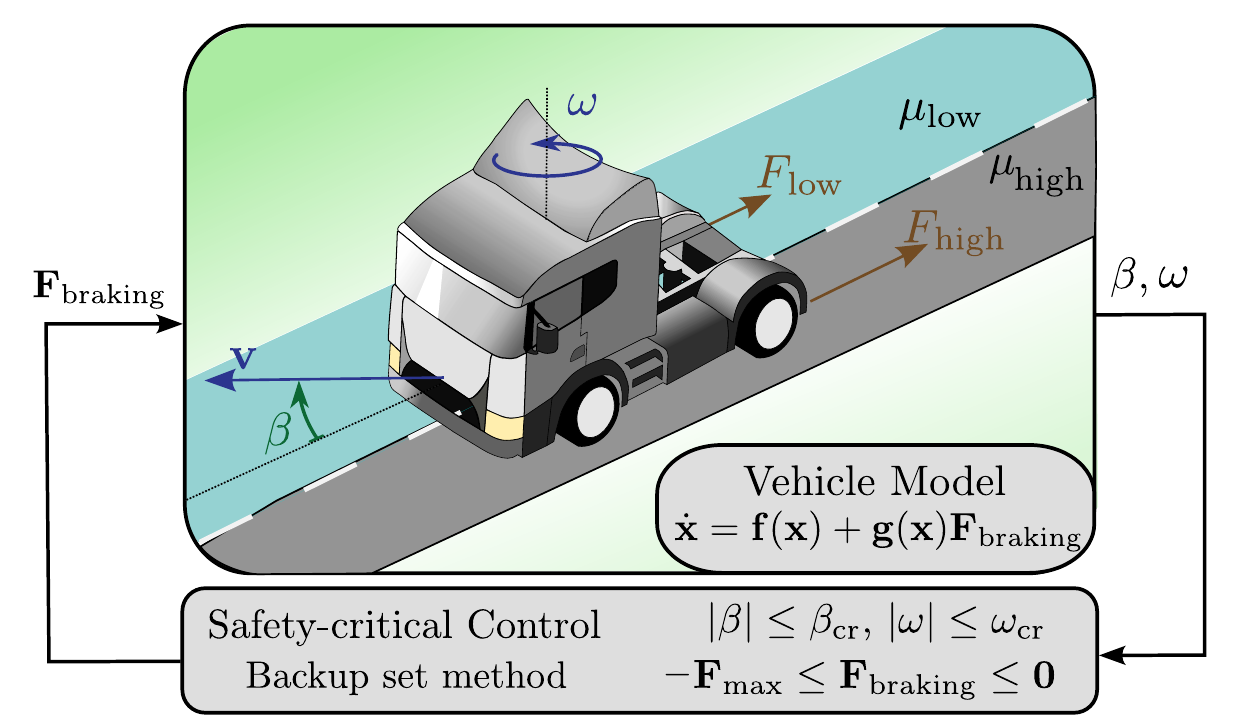}}
\vspace{-3mm}
\caption{A vehicle braking on an asymmetric (split-$\mu$) surface. The lateral kinematics (side slip angle and yaw rate) are kept within safe bounds with limited braking forces using the proposed safety-critical controller.}
\vspace{-3mm}
\label{fig:musplitvehicle}
\end{figure}

Vehicle control methods have the potential to mitigate the trade-off between the stopping distance and lateral stability.
To this end, brake controllers may exploit three fundamental quantities: brake pressure, brake pressure difference between left and right sides, and steering angle. Anti-lock braking systems (ABS) adjust the brake pressure in order to maximize the braking forces. Several studies \cite{yun2011brake,dong2017vehicle,kim2022control} draw upon this idea and adopt ABS to split-$\mu$ braking. 
Other methods \cite{zhou2011study,ahn2012modeling} use active steering control alongside ABS to ensure safe braking.
Alternatively, one may also regulate the brake pressure difference which is proportional to the yaw moment.
The work \cite{mirzaeinejad2014optimization} optimizes the yaw moment to reduce the yaw rate to a desired value, whereas \cite{feng2015integrated} determines a desired yaw moment for ABS
while using active steering control.
To allocate the yaw moment among the wheels, the former reference developed its own algorithm, while the latter applied a fuzzy controller.

In this paper, we take a novel approach to split-$\mu$ braking: we develop {\em safety-critical controllers}.
In contrast to previous works which aim to {\em stabilize} lateral motions via a specific yaw moment, we seek to {\em constrain} the vehicle's lateral kinematics.
Our goal is to prevent the yaw rate and side slip angle from rising to extreme values, instead of driving them to zero which would lead to equal braking on both sides and a drastic increase in the stopping distance.
We use safety-critical control techniques to keep
the yaw rate and side slip angle below safe bounds.
The proposed controllers provide the required braking forces, which may be applied to the wheels using an ABS.

In particular, we use the concept of control barrier functions (CBFs) \cite{ames2017cbf} to design safe controllers.
CBFs provide a state-of-the-art solution to a wide range of safety-critical control problems by ensuring that the state of the system is maintained inside a safe set in the state space.
This approach has demonstrated success in applications ranging from robotics
\cite{nguyen2021robust,molnar2022modelfree,garg2024advances}
to aerospace systems~\cite{hobbs2023rta}.
Vehicle control has also leveraged CBFs to ensure safety for both longitudinal and lateral motions, including adaptive cruise control~\cite{ames2014control, xu2017correctness, molnar2022safety} with experiments on connected automated vehicles~\cite{Alan2023AV}, obstacle avoidance scenarios~\cite{chen2018obstacle}, and~lane keeping~\cite{xu2017correctness, jiang2024safety}.

In split-$\mu$ braking, a key aspect of designing a safe controller is friction, which limits the available braking forces and thus leads to input constraints in addition to the lateral safety constraints.
The literature offers several CBF-based approaches to input-constrained safety-critical control.
These include input-constrained CBFs~\cite{agrawal2021safe} and sum-of-squares programming~\cite{wang2023safety} that reduce the size of safe sets to account for input bounds,
and integral CBFs~\cite{Ames2021icbf} that use forward integration to define safe sets for states and inputs separately.
A core challenge in these approaches is to prove the recursive feasibility of safe controllers.
To address this, the {\em backup set method} \cite{gurriet2018online} (or also called backup CBF method) predicts the future evolution of the system under a predefined backup controller and uses this information to calculate a safe and feasible control input.
This approach has been leveraged in a variety of applications, such as robotic systems~\cite{gurriet2020scalable},
including rovers~\cite{janwani2024learning}
and drones~\cite{singletary2022onboard},
spacecraft~\cite{dunlap2022comparing, vanwijk2024disturbancerobust},
and kinematic vehicle models~\cite{chen2021backup, rivera2024forward, rabiee2025soft}.


Thanks to its formal safety guarantees and feasibility under input constraints, the backup set method is an ideal candidate for tackling split-$\mu$ braking.
However, a prerequisite to this method is the synthesis of a backup controller and a corresponding backup set, which may be challenging for complex systems.
Most of the above-mentioned works used physical intuition to design backup sets and controllers, while \cite{wang2024warming} elaborated a formal construction for the special case of second-order systems.
These prior works, however, do not necessarily verify that the resulting backup set-backup controller pairs satisfy the properties required for safety and feasibility.
An algorithmic approach to the construction of {\em valid} backup set-backup controller pairs is missing from the literature.


In this paper, we develop a safety-critical control framework for vehicles braking on asymmetric surfaces.
We use the backup set method to keep the lateral kinematics (yaw rate and side slip angle) within safe bounds while satisfying input constraints arising from the physical limits of braking forces.
Our contributions are twofold.
First, we introduce a systematic method to generate backup set-backup controller pairs via feedback linearization and continuous-time Lyapunov equations.
This advances control theory for input-constrained safety-critical systems
by solving a key challenge for backup CBFs.
Second, we use this method to develop a safe controller for split-$\mu$ braking that prevents excessive lateral motions while bringing vehicles to a stop within a short distance.
We establish the controller for a four-wheel planar vehicle model with tire dynamics,
and we demonstrate the efficacy of our approach in simulation.
To the best of our knowledge, this is the first application of CBF theory to split-$\mu$ braking.


The rest of the paper is organized as follows. 
Section~\ref{sec:CBF} provides background on CBFs and the backup set method.
Section~\ref{sec:CTLE} presents our approach for constructing backup set-backup controller pairs, demonstrated on two simple examples.
Section~\ref{sec:splitmu} applies this method to synthesizing controllers for safe split-$\mu$ braking, which are implemented and simulated on a descriptive vehicle model.
Section \ref{sec:concl} closes with conclusions.



\section{Safety-critical Control with Input Constraint}
\label{sec:CBF}

Consider a control system in affine form:
\begin{equation}
    \dot{\bx}=\bf(\bx)+\bg(\bx)\bu\,,
    \label{eq:system}
\end{equation}
where ${\bx \in \Rn}$ is the state, 
${\bf : \Rn\to \Rn}$ and ${\bg : \Rn \to \R^{n \times m}}$ are smooth\footnote{A function is smooth if it is continuously differentiable as many times as necessary.
For a smooth function ${h : \Rn \to \R}$ and vector field ${\bf : \R^n \to \Rn}$, ${\nabla h : \Rn \to \Rn}$ denotes the gradient and ${\nabla\bf : \Rn \to \R^{n \times n}}$ is the Jacobian.} vector and matrix functions, respectively, ${\bu \in \U}$ is the input, and ${\U \subset \Rm}$ is the set of admissible inputs.
For simplicity, we assume that $\U$ is given by box constraints:
\begin{equation} \label{eq:inputset}
\begin{aligned}
    \U = \big\{ \bu\in \Rm : u_{\min,i} \leq u_i \leq u_{\max,i} \,, \
    \forall i \in \{ 1,2,\dots,m \}  \big\}\,,
\end{aligned}
\end{equation}
where $u_{\min,i}$ and $u_{\max,i}$ are the lower and upper bounds of the $i$-th input component $u_i$.
Consider a locally Lipschitz controller ${\bk : \Rn \to \U}$, ${\bu=\bk(\bx)}$, and the closed-loop dynamics:
\begin{equation}
    \dot{\bx}=\bf(\bx)+\bg(\bx)\bk(\bx)\,.
    \label{eq:closedloop}
\end{equation}
Then, with initial condition ${\bx(0) \in \Rn}$, the system \eqref{eq:closedloop} has a unique solution $\bx(t)$ that is assumed to exist for all ${t \geq 0}$. 

In order to maintain safety, the controller $\bk$ must be designed so that the solution of \eqref{eq:closedloop} remains inside a constraint set\footnote{We use the term {\em constraint set} for the user-defined set $\s$. This is not necessarily a {\em safe set} that can be made forward invariant with input constraints.}
${\s \subset \Rn}$. This is related to the forward invariance of $\s$. 

\begin{definition}
The dynamical system \eqref{eq:closedloop} is safe w.r.t.~$\s$ if the set $\s$ is forward invariant along \eqref{eq:closedloop}, meaning that
${\bx(0) \in \s \implies \bx(t) \in \s}$, ${\forall t \geq 0}$.
\end{definition}

\subsection{Background on control barrier functions}

Here, we overview the theory of control barrier functions for safe control design.
We define the set $\s$ as the zero-superlevel set of a smooth function ${h:\Rn\to\R}$:
\begin{align}
    \s&=\{ \bx\in \Rn : h(\bx)\geq0\}\,.
    \label{eq:safeset}
\end{align}
So far $h$ is merely a constraint function whose sign indicates whether a state is in the constraint set.
The following definition determines whether $h$ is a valid control barrier function (CBF).

\begin{definition}[\!\!\cite{ames2017cbf}]
Function $h$ is a {\em control barrier function} for \eqref{eq:system} on $\s$ if there exists\footnote{Function  ${\alpha : [0,a) \to \R}$, ${a>0}$, is of class-$\K$ (${\alpha \in \K}$) if it is continuous, strictly increasing, and ${\alpha(0)=0}$.
Function ${\alpha : (-b,a) \to \R}$, ${a,b>0}$ is of extended class-$\K$ (${\alpha \in \Ke}$) if it has the same properties.} ${\alpha \in \Ke}$ such that:
\begin{equation}
    \sup_{\bu \in \U}\,\dot{h}(\bx,\bu) > -\alpha\big(h(\bx)\big)
    \label{eq:forward}
\end{equation}
holds for all $\bx\in\s$, where the time derivative of $h$ is:
\begin{equation}
    \dot{h}(\bx,\bu) = \nabla h(\bx) \cdot \big( \mathbf{f}(\bx)+\mathbf{g}(\bx)\bu \big)\,.
\end{equation}
\end{definition} 

The existence of a CBF guarantees that the set $\s$ can be rendered forward invariant, as formally stated below.
\begin{theorem}[\!\!\cite{ames2017cbf}]
    \textit{If $h$ is a CBF for \eqref{eq:system} on $\s$, then any locally Lipschitz controller $\mathbf{u = k(x)}$ satisfying:
\begin{equation}
    \dot{h}(\mathbf{x,k(x)}) \geq -\alpha\big(h(\bx)\big)\,,
    \label{eq:forward_inv}
\end{equation}
for all ${\bx \in \s}$, renders~\eqref{eq:closedloop} safe w.r.t.~$\s$.}
\label{th1}
\end{theorem}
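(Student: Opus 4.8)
The plan is to reduce the claim to a one-dimensional differential inequality along closed-loop trajectories and close it with a comparison argument. Fix $\bx(0) \in \s$ and let $\bx(t)$ denote the associated solution of \eqref{eq:closedloop}, which exists and is unique for all $t \geq 0$ since $\bk$ is locally Lipschitz (as assumed in the setup). Define the scalar signal $\lambda(t) := h(\bx(t))$; by the chain rule $\dot\lambda(t) = \nabla h(\bx(t)) \cdot \big( \bf(\bx(t)) + \bg(\bx(t))\bk(\bx(t)) \big) = \dot h\big(\bx(t), \bk(\bx(t))\big)$, and this is continuous in $t$. Wherever $\bx(t) \in \s$, i.e. $\lambda(t) \geq 0$, the controller condition \eqref{eq:forward_inv} gives the scalar differential inequality $\dot\lambda(t) \geq -\alpha\big(\lambda(t)\big)$.

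Next I would introduce the scalar comparison ODE $\dot\mu = -\alpha(\mu)$ with $\mu(0) = \lambda(0) \geq 0$. Since $\alpha \in \Ke$ and $\alpha(0) = 0$, the constant $\mu \equiv 0$ solves this ODE, so any solution starting at $\mu(0) \geq 0$ stays in $[0, \mu(0)]$ and in particular satisfies $\mu(t) \geq 0$ for all $t \geq 0$. Applying the comparison lemma to $\dot\lambda \geq -\alpha(\lambda)$ against $\dot\mu = -\alpha(\mu)$ yields $\lambda(t) \geq \mu(t) \geq 0$, i.e. $h(\bx(t)) \geq 0$, hence $\bx(t) \in \s$, for all $t \geq 0$. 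As $\bx(0) \in \s$ was arbitrary, $\s$ is forward invariant along \eqref{eq:closedloop}, which is precisely the definition of safety w.r.t. $\s$.

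The step I expect to be delicate is the application of the comparison lemma under the stated regularity: the textbook version requires the right-hand side $-\alpha(\cdot)$ to be locally Lipschitz, whereas $\alpha$ is only assumed continuous and strictly increasing. I would handle this either by passing to the maximal solution of $\dot\mu = -\alpha(\mu)$ (whose existence and one-sided uniqueness follow from monotonicity even without a Lipschitz condition), or by a direct Nagumo-type boundary argument: if $\lambda$ ever became negative, take the last time $t_1$ before the crossing at which $\lambda(t_1) = 0$; there $\dot\lambda(t_1) \geq -\alpha(0) = 0$, and continuity of $\dot\lambda$ together with the sign of $\alpha$ near the origin contradicts an immediate dip below zero. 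A minor bookkeeping point is that $\dot\lambda \geq -\alpha(\lambda)$ is only asserted on $\s$, so the argument should first be run on the time interval on which $\bx(t) \in \s$ and only then be extended to all $t \geq 0$.
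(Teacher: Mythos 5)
The paper does not prove Theorem~\ref{th1}; it cites it from \cite{ames2017cbf}. Your comparison-lemma argument is essentially the standard proof given in that reference and in the broader CBF literature, and your reduction to the scalar inequality $\dot\lambda \geq -\alpha(\lambda)$ via $\lambda(t) = h(\bx(t))$ is exactly the right move. You also correctly flag the two genuine subtleties that a careless reading glosses over: the textbook comparison lemma wants a locally Lipschitz right-hand side whereas $\alpha$ is only continuous and strictly increasing, and the differential inequality is only guaranteed on $\s$ rather than globally, so a first-exit-time argument is needed before invoking comparison. Both of your proposed fixes are sound; the monotonicity of $\alpha$ plus $\alpha(0)=0$ ensures that the equilibrium $\mu\equiv 0$ cannot be crossed from above by any solution of $\dot\mu=-\alpha(\mu)$, which is all the one-sided uniqueness the argument requires. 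One small gap worth tightening: the ``last time $t_1$ before the crossing with $\lambda(t_1)=0$'' contradiction sketch only gives $\dot\lambda(t_1)\geq 0$, which by itself does not preclude $\lambda$ dipping negative immediately after $t_1$ (the derivative could vanish and higher-order behavior could push $\lambda$ down). The clean way to close it is to keep the comparison route: on the maximal interval $[0,\tau)$ on which $\lambda(t)\geq 0$, compare against the maximal solution of $\dot\mu=-\alpha(\mu)$, conclude $\lambda\geq 0$ there, and then argue $\tau=\infty$ by continuity. With that polish the proof is complete.
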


CBFs enable safety-critical controller synthesis. In essence, one needs to find a controller $\bk$ satisfying~\eqref{eq:forward_inv}.
This is possible, e.g., via optimization, by solving the quadratic program (QP):
\begin{equation}
\begin{aligned}
\mathbf{k}(\bx) = \, \underset{\bu\in\U}{\operatorname{argmin}} \quad & \|\bu-\kd(\bx)\|^2\,,\\
\textrm{s.t.} \quad & \dot{h}(\bx,\bu) \geq -\alpha(h(\bx))\,.
\label{eq:QP}
\end{aligned}
\end{equation}
In \eqref{eq:QP} we minimize the deviation of the input $\bu$ from a desired controller
${\kd : \Rn \to \U}$ such that $\s$ is made forward invariant. The CBF-QP in \eqref{eq:QP} is feasible if $h$ is a valid CBF as in \eqref{eq:forward}.

Finding a valid CBF, however, is a hard problem, especially with input constraints.
An arbitrary function $h$ may not be a CBF satisfying~\eqref{eq:forward}, and it may not be possible to render an arbitrary constraint set $\s$ forward invariant.
This motivated the introduction of the backup set method, which makes a subset of $\s$ forward invariant, even with input constraints.

\subsection{Backup set method}

The \textit{backup set method} \cite{gurriet2018online} has been proposed for addressing safety-critical control under input constraints. Its core idea is to predict the future state of the system under an input-constrained controller, called the backup controller, and ensure that the safe control task remains feasible with this controller.
A detailed proof of safety guarantees is given in \cite{gurriet2018online, molnar2023safety}.

The first step of this method is the construction of a valid {\em backup controller} and a corresponding {\em backup set}, which our work aims to facilitate.
 That is, one must define a subset of the constraint set $\s$, namely the backup set ${\Sb \subseteq \s}$:
\begin{equation}
    \Sb =\{ \bx\in \Rn : \hb(\bx)\geq0\}\,,
    \label{eq:backupset}
\end{equation}
with a smooth function ${\hb : \Rn \to \R}$. Simultaneously, one must also define an input-constrained, smooth backup controller ${\kb : \Rn \to \U}$
satisfying for all ${\bx \in \Rn}$ that:
\begin{equation} \label{eq:kb_def}
\begin{aligned}
    u_{\min,i} \leq\, & k_{\mathrm{b},i}(\bx)\leq u_{\max,i}, \quad
    \forall i\in\{ 1,2,\dots,m \} \,.
\end{aligned}
\end{equation}
Furthermore, the backup controller must render the backup set forward invariant along the corresponding closed-loop system:
\begin{equation}
    \dot{\bx}=\bf(\bx)+\bg(\bx)\kb(\bx)\triangleq\,\fb(\bx)\,,
    \label{eq:f_b}
\end{equation}
whose solution $\phib(t,\bx(0))$ is called the {\em backup flow}. 

The conditions for a valid backup set-backup controller pair are summarized as follows.
\begin{definition} \label{def:backup}
A set ${\Sb \subset \Rn}$ and a controller ${\kb : \Rn \to \Rm}$ are a {\em valid backup set -- backup controller pair} if:
\begin{itemize}
    \setlength{\itemindent}{12pt}
    \item[(C1)] ${\Sb \subseteq \s}$,
    \item[(C2)] ${\kb(\bx) \in \U}$ for all ${\bx \in \Rn}$,
    \item[(C3)] $\kb$ renders $\Sb$ forward invariant along~\eqref{eq:f_b}.
\end{itemize}
\end{definition}

The primary challenge in applying the backup set method
is the non-trivial task of finding a valid backup set-backup controller pair that satisfies these conditions. Existing examples \cite{chen2021backup, singletary2022onboard, molnar2023safety, wang2024warming, van2025disturbance} typically use physics-based intuition to construct problem-specific backup sets and backup controllers, and they do not necessarily verify the forward invariance of the backup set.
Below we will address this challenge and propose a method to generate valid backup sets and controllers.

\begin{figure}[t!]
\centerline{\includegraphics[width=0.46\textwidth]{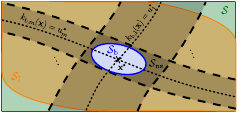}}
\vspace{-2mm}
\caption{
Illustration of the constraint set $\s$ (green), the backup set $\Sb$ (blue), and the invariant set $\SI$ (orange) obtained by enlarging the backup set.
The backup set, centered at point $\bx^*$ (black), is contained inside both the constraint set $\s$ and the set $\Sns$ (gray) where the backup controller does not saturate.
}
\label{fig:hb_position}
\vspace{-4mm}
\end{figure}

If one manages to find a backup set, it may be significantly smaller than the constraint set \cite{gurriet2018online, gurriet2020scalable}; see the illustration in Fig.~\ref{fig:hb_position}. This conservativeness can be resolved by enlarging the backup set to an invariant set ${\SI \subseteq \s}$ via forward prediction:
\begin{equation}
    \SI=
    \left\{ \bx \in \Rn :
    \begin{array}{l}
    \phib(\theta,\bx) \in \s, \ \forall \theta \in [0,T], \\
    \phib(T,\bx) \in \Sb
    \end{array}
    \right\}\,.
    \label{eq:SI}
\end{equation}
Accordingly, $\SI$ is the set of points $\bx$ from which the backup flow $\phib(\theta,\bx)$ evolves in the constraint set $\s$ for a duration ${\theta\in[0,T]}$ and arrives at the backup set $\Sb$ at time $T$.\
In this context, ${T \geq 0}$ is called the integration time or horizon.
With ${T=0}$, we obtain ${\SI=\Sb}$, and by increasing $T$, $\SI$ expands.

The following theorem establishes that, even under input constraints, the system~\eqref{eq:closedloop} can be rendered safe w.r.t.~the set $\SI$ in \eqref{eq:SI} which is a subset of the constraint set $\s$.

\begin{theorem}[\!\!\cite{molnar2023safety}]
\textit{Consider the system \eqref{eq:system}, the set $\s$ in \eqref{eq:safeset}, a backup set ${\Sb \subseteq \s}$ in \eqref{eq:backupset}, a backup controller ${\kb : \Rn \to \U}$ that renders $\Sb$ forward invariant along~\eqref{eq:f_b}, and the set $\SI$ in \eqref{eq:SI} with some ${T \geq 0}$.
Then, there exist ${\alpha,\alpha_\mathrm{b} \in \K}$ such that a controller ${\bk : \Rn \to \U}$ satisfying:
\begin{equation}
\begin{aligned}
    \dot{h}\big(\phib(\theta,\bx),\mathbf{k}(\bx)\big) &\geq -\alpha\big(h(\phib(\theta,\bx))\big)\,,
    \,\forall\theta\in[0,T]\,,\\
    \dot{h}_{\mathrm{b}} \big( \phib(T,\bx),\bk(\bx) \big) &\geq -\alpha_{\mathrm{b}} \big( \hb(\phib(T,\bx)) \big)\,,
\label{eq:backup_constr}
\end{aligned}
\end{equation}
for all ${\bx \in \SI}$ is guaranteed to exist. Moreover, any locally Lipschitz controller ${\bk :\Rn \to \U}$ that satisfies \eqref{eq:backup_constr} for all ${\bx \in \SI}$ renders ${\SI \subseteq \s}$ forward invariant along \eqref{eq:closedloop}.}
\label{th:2}
\end{theorem}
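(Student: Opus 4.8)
The plan is to realize $\SI$ as the zero-superlevel set of one locally Lipschitz function, prove the existence claim by exhibiting the backup controller itself as a witness, and prove forward invariance by a Nagumo-type tangency argument on the active constraints. Throughout I read $\dot h\big(\phib(\theta,\bx),\bk(\bx)\big)$ in \eqref{eq:backup_constr} as the time derivative of $t\mapsto h\big(\phib(\theta,\bx(t))\big)$ along \eqref{eq:closedloop}, i.e.\ $\nabla h(\phib(\theta,\bx))\,\bQ(\theta,\bx)\,\big(\bf(\bx)+\bg(\bx)\bk(\bx)\big)$ with $\bQ(\theta,\bx):=\partial\phib(\theta,\bx)/\partial\bx$ the sensitivity matrix, and likewise for $\hb$. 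Set $h_\theta:=h\circ\phib(\theta,\cdot)$ for $\theta\in[0,T]$ and $h_T:=\hb\circ\phib(T,\cdot)$, all smooth since $\phib$ is smooth in $(\theta,\bx)$; then by \eqref{eq:SI} one has $\SI=\{\bx\in\Rn: H(\bx)\geq 0\}$ with $H(\bx):=\min\{\min_{\theta\in[0,T]}h_\theta(\bx),\,h_T(\bx)\}$ locally Lipschitz because $[0,T]$ is compact, and since $h_0=h$ we get $\SI\subseteq\s$. Assuming, as in the standard setting of the backup set method and as holds for the constructions used here, that $\fb$ is forward complete and $\Sb$ is compact, $\SI$ is a closed subset of the compact set $\{\bx:\phib(T,\bx)\in\Sb\}$, hence compact; by forward invariance of $\Sb$ the backup flow from any $\bx\in\SI$ stays in $\s$ for $\theta\in[0,T]$ and in $\Sb$ for $\theta\geq T$, so $\theta\mapsto h(\phib(\theta,\bx))\geq 0$ on $[0,\infty)$ and $\theta\mapsto\hb(\phib(\theta,\bx))\geq 0$ on $[T,\infty)$, and all these trajectories lie in a common compact set.

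For the existence claim I would take $\bk=\kb$. Differentiating the semigroup identity $\phib(\theta,\phib(s,\bx))=\phib(\theta+s,\bx)$ in $s$ at $s=0$ yields $\bQ(\theta,\bx)\,\fb(\bx)=\fb(\phib(\theta,\bx))$; since $\bf+\bg\kb=\fb$, the left-hand sides of \eqref{eq:backup_constr} with $\bk=\kb$ collapse to $\partial_\theta\,h(\phib(\theta,\bx))$ and $\partial_\theta\,\hb(\phib(\theta,\bx))|_{\theta=T}$. Wherever the nonnegative curves $\theta\mapsto h(\phib(\theta,\bx))$ (on $[0,T]$) and $\theta\mapsto\hb(\phib(\theta,\bx))$ (at $\theta=T$) vanish, they attain a one-sided minimum, so these $\theta$-derivatives are $\geq 0$ there. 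Invoking the standard fact that any compact $\mathcal{K}\subset[0,\infty)\times\R$ satisfying $q\leq 0$ whenever $p=0$ admits some $\alpha\in\K$ with $q\leq\alpha(p)$ on $\mathcal{K}$ --- applied to the compact images of $(\theta,\bx)\mapsto\big(h(\phib(\theta,\bx)),\,-\partial_\theta h(\phib(\theta,\bx))\big)$ over $[0,T]\times\SI$ and of $\bx\mapsto\big(\hb(\phib(T,\bx)),\,-\partial_\theta\hb(\phib(\theta,\bx))|_{\theta=T}\big)$ over $\SI$ --- produces $\alpha,\alpha_\mathrm{b}\in\K$ for which the smooth, $\U$-valued $\kb$ satisfies \eqref{eq:backup_constr} on $\SI$. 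This proves the first assertion, and it makes the CBF-QP obtained by imposing \eqref{eq:backup_constr} feasible.

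For forward invariance, fix any locally Lipschitz $\bk:\Rn\to\U$ satisfying \eqref{eq:backup_constr} on $\SI$. Along \eqref{eq:closedloop} from $\bx(0)\in\SI$, suppose the solution leaves $\SI$ and let $t_0$ be the last time it lies in $\SI$; then $H(\bx(t_0))=0$, so at least one constraint is active at $\bx(t_0)$: either $h_{\theta^\ast}(\bx(t_0))=0$ for some $\theta^\ast\in[0,T]$, or $h_T(\bx(t_0))=0$. Since $\bx(t_0)\in\SI$, \eqref{eq:backup_constr} holds there, and because $\alpha,\alpha_\mathrm{b}$ vanish at $0$ it forces $\dot h_{\theta^\ast}(\bx(t_0),\bk(\bx(t_0)))\geq 0$ (resp.\ $\dot h_T\geq 0$) for every active constraint, i.e.\ the closed-loop field $\bf+\bg\bk$ lies in the tangent cone of $\SI$ at $\bx(t_0)$. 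A Nagumo-type invariance theorem for sets cut out by finitely or compactly many smooth inequalities --- valid under the mild constraint qualification that $\SI$ inherits --- then contradicts the exit. Hence $\SI$ is forward invariant along \eqref{eq:closedloop}, and since $\SI\subseteq\s$ the closed loop is safe w.r.t.\ $\s$.

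I expect this last step to be the main obstacle. The inequalities in \eqref{eq:backup_constr} are assumed only on $\SI$, not on each individual set $\{h_\theta\geq 0\}$, so Theorem~\ref{th1} cannot be applied constraint by constraint, and $\SI$ is an intersection of a continuum of smooth inequalities. Making the tangency argument rigorous requires either a nonsmooth analysis of $H$ via its Clarke generalized gradient together with a comparison lemma for class-$\K$ functions (handling their possible non-Lipschitzness at the origin, e.g.\ through a locally Lipschitz minorant), or a verification that the tangent cone of $\SI$ equals the intersection of the half-space tangent cones of the active constraints. The remaining ingredients --- the reduction to $H$, the flow-Jacobian identity, and the compactness construction of $\alpha,\alpha_\mathrm{b}$ --- are routine.
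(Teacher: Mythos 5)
Your plan for the existence half is the right one: take $\bk=\kb$, use the flow-Jacobian identity $\bphi(\theta,\bx)\,\fb(\bx)=\fb(\phib(\theta,\bx))$ to turn the left-hand sides of \eqref{eq:backup_constr} into $\theta$-derivatives of $h\circ\phib$ and $\hb\circ\phib$, note that these are nonnegative at zeros of the respective constraint functions (because from $\bx\in\SI$ the backup flow never exits $\s$, nor exits $\Sb$ once reached), and manufacture $\alpha,\alpha_{\rm b}$ via a compactness/envelope lemma. Two technical slips. First, $\{\bx:\phib(T,\bx)\in\Sb\}$ is not compact in general even for a forward-complete $\fb$ and compact $\Sb$ --- for $\dot x=-x^3$ with $\Sb=[-1,1]$ and $T$ large, every initial condition is flowed into $\Sb$ and the preimage is all of $\R$; the observation that actually gives compactness is that $\SI$ is closed by construction and contained in $\s$, hence compact whenever $\s$ is, which is the case in the paper's examples. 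Second, a ``locally Lipschitz minorant'' of $\alpha$ points the wrong way (a minorant $\alpha'\le\alpha$ only gives $-\alpha(\cdot)\le-\alpha'(\cdot)$, weakening the needed inequality) and is also unnecessary: for any $\alpha\in\Ke$ the comparison ODE $\dot\xi=-\alpha(\xi)$ cannot carry a nonnegative initial value below zero, since $\dot\xi>0$ whenever $\xi<0$, with no Lipschitz hypothesis required.

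For forward invariance you rightly flag the gap, and the Nagumo route as stated does not close without a constraint qualification: the closed-loop field lying in each active constraint's half-space tangent cone does not imply it lies in the tangent cone of the intersection $\SI$ when the active indices range over a continuum. The Clarke/Danskin route is the one to actually carry out: set $W(t):=H(\bx(t))$, apply the envelope theorem for parametric minima over a compact index set to obtain $D^+W(t)=\min\big\{\tfrac{d}{dt}h\big(\phib(\theta,\bx(t))\big):\theta\ \text{active}\big\}$ (with the $\hb$-term included among the candidates), observe that active $\theta$ means $h(\phib(\theta,\bx(t)))=W(t)$ so that \eqref{eq:backup_constr} delivers $D^+W(t)\ge -\max(\alpha,\alpha_{\rm b})(W(t))$ whenever $\bx(t)\in\SI$, and then run the single-barrier invariance argument of Theorem~\ref{th1} directly on $W$. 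This eliminates the tangent-cone qualification entirely. Since you left both routes as sketches, the second assertion of the theorem is not yet proved; carrying the reduction through $W$ to completion is the remaining substantive work.
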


For controller synthesis, the CBF-QP \eqref{eq:QP} is extended with predictive constraints, resulting in the {\em backup CBF-QP}:
\begin{equation}
\begin{aligned}
\mathbf{k}(\bx) & = \underset{{\bu\in\U} }{\operatorname{argmin}} \quad  \|\bu-\mathbf{k}_{\mathrm{d}}(\bx)\|^2\,, \\
\textrm{s.t.} \quad & \dot{h} \big( \phib(\theta,\bx),\bu \big) \geq -\alpha\big(h(\phib(\theta,\bx))\big)\,, \
\forall \theta \!\in\! [0,T]\,, \\
&\dot{h}_{\mathrm{b}} \big( \phib(T,\bx),\bu \big) \geq -\alpha_{\mathrm{b}}\big(\hb(\phib(T,\bx))\big)\,,
\label{eq:QP2}
\end{aligned}
\end{equation}
which is feasible (with appropriate $\alpha$, $\alpha_{\mathrm{b}}$) and safe on the basis of Theorem \ref{th:2}.
The time derivatives in \eqref{eq:QP2} can be evaluated using the chain rule:
\begin{equation}
\begin{aligned}
    \dot{h} \big( \phib(\theta,\bx),\bu \big) & = \nabla h(\phib(\theta,\bx)) \!\cdot\! \bphi(\theta,\bx) \!\cdot\! \big(\bf(\bx) \!+\! \bg(\bx)\bu \big) \,,\\
    \dot{h}_{\mathrm{b}}(\phib(T,\bx),\bu) & = \nabla \hb(\phib(T,\bx)) \!\cdot\! \bphi(T,\bx) \!\cdot\! \big(\bf(\bx) \!+\! \bg(\bx)\bu \big) \,,\\
\end{aligned}
\end{equation}
where
${\bphi(\theta,\bx) \triangleq \frac{\partial \phib}{\partial \bx}(\theta,\bx)}$ is called the sensitivity matrix.
Typically, the flow $\phib(\theta,\bx)$ and its sensitivity $\bphi(\theta,\bx)$ are calculated by numerically solving the initial value problem:
\begin{align}
\begin{aligned}
    \frac{\partial \phib}{\partial \theta}(\theta,\bx) & = \fb\big(\phib(\theta,\bx)\big)\,,
    && \phib(0, \bx) = \bx\,, \\
    \frac{\partial \bphi}{\partial \theta}(\theta,\bx)  & = \nabla\fb\big(\phib(\theta,\bx)\big) \bphi(\theta,\bx)\,, && \ \bphi(0, \bx) = \bI\,, 
    \label{eq:IVP}
\end{aligned}
\end{align}
where $\bI$ is the ${n\times n}$ identity matrix.
For computational tractability, the continuous constraint in \eqref{eq:QP2} over the horizon $[0,\,T]$ is usually enforced at $N_\mathrm{c}$ discrete time-instances,
which may affect safety guarantees.
Rigorous methods for inter-sample safety~\cite{gurriet2020scalable, tan2025zero}, are beyond the scope of this paper.

\section{Synthesis of Backup Sets and Controllers}
\label{sec:CTLE}

Our main contribution is a systematic method for constructing valid backup set-backup controller pairs that rigorously handle input constraints. Our approach begins with defining the desired geometry of the backup set using the continuous-time Lyapunov equation \cite{Khalil2002}. This allows us to generate ellipsoid-shaped sets that are provably forward invariant, but for a simplified, "target" linear system. To bridge the gap to our original nonlinear system, we synthesize a backup controller via feedback linearization. The crucial insight for handling input constraints is that the linearization is only required inside the backup set. Thus, by choosing the backup set to be sufficiently small to lie within the region where the controller does not saturate, we can safely apply saturation outside the set. This helps to satisfy the input limits globally without invalidating the set's guaranteed forward invariance.


Our construction can be represented in the following form for feedback linearizable systems with invertible input matrix:
\begin{equation}
    \begin{aligned}
        \hb(\bx)&=c-V(\bx)\,,\\
        \kb(\bx)&=
        \mathrm{sat} \Big( \bg(\bx)^{-1} \big(-\bf(\bx) +\bA(\bx-\bx^*)  \big) \Big)\,,
    \end{aligned}
    \label{eq:main_idea}
\end{equation}
where ${c>0}$ is a constant, ${V: \Rn \to \R}$ is a Lyapunov function, 
$\bA\in\R^{n\times n}$ is a system matrix, $\bx^*$ is an equilibrium point, and $\mathrm{sat}(\cdot)$ saturates the backup controller to satisfy \eqref{eq:kb_def}.

In the following subsections, we revisit Lyapunov's method and feedback linearization, and we explain how to choose $c$, $V$, $\bA$, and $\bx^*$, as well as how to extend this idea to more general systems. Finally, we provide illustrative examples.

\subsection{Lyapunov equations for backup sets}
First we discuss the construction of backup sets for linear systems.
An important observation is that the shape of the backup set is irrelevant for the applicability of the method. This allows us to focus on finding a forward invariant backup set for a given backup controller---opposite to the QPs above that seek controllers ensuring forward invariance for given sets.

We use Lyapunov's method for the construction of forward invariant sets for linear systems in the form:
\begin{equation}
    \dot{\bx}=\bA(\bx-\bx^*)\,,
    \label{eq:lin}
\end{equation}
where ${\bx^* \in \Rn}$ is the equilibrium point and ${\bA \in \R^{n\times n}}$ is the system matrix.
Later we will design a backup controller that renders the closed-loop dynamics in the form of~\eqref{eq:lin}. By choosing matrix $\bA$ to be Hurwitz (so that its eigenvalues have negative real parts), the equilibrium point $\bx^*$ becomes stable for the linear system \eqref{eq:lin}.
The existence of a Lyapunov function is then guaranteed by a converse Lyapunov theorem.
\begin{lemma}[\!\!{\cite[Thm 4.17]{Khalil2002}}] \label{lemma:1}
\textit{
If the equilibrium point $\bx^*$ of the linear system \eqref{eq:lin} is stable, then there exists a smooth Lyapunov function $V:\Rn\to\R$ such that:
\begin{equation}
\begin{aligned}
    V(\bx^*)=0\,, \quad
    & V(\bx)>0\,, \
    \forall \bx \in \Rn \setminus \{\bx^*\}\,,\\
    & \dot{V}(\bx) \leq 0\,, \
    \forall \bx \in \Rn\,,
    \label{eq:lyapunov}
\end{aligned}
\end{equation}
with ${\dot{V}(\bx) = \nabla V(\bx) \cdot \bA(\bx - \bx^*)}$,
and the sublevel sets of $V$:
\begin{equation}
\Omega_c=\{\bx\in\Rn : V(\bx)\leq c \}\,,
\label{eq:omega_c}
\end{equation}
are forward invariant along~\eqref{eq:lin} for any ${c > 0}$.
}
\end{lemma}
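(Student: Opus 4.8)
The plan is to prove Lemma~\ref{lemma:1} by invoking the standard converse Lyapunov construction for Hurwitz linear systems, which yields a \emph{quadratic} Lyapunov function, and then verifying the three claimed properties (positive definiteness, $\dot V \leq 0$, and forward invariance of the sublevel sets $\Omega_c$) directly from the quadratic form.

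First I would reduce to the case $\bx^* = \bzero$ by the change of variables $\bz = \bx - \bx^*$, so that \eqref{eq:lin} becomes $\dot{\bz} = \bA\bz$ with $\bA$ Hurwitz. Then I would recall the solution of the continuous-time Lyapunov equation: since $\bA$ is Hurwitz, for any symmetric positive definite $\bQ$ the equation $\bA^\top \bP + \bP \bA = -\bQ$ has a unique symmetric positive definite solution $\bP = \int_0^\infty e^{\bA^\top t} \bQ\, e^{\bA t}\, \mathrm{d}t$ (this is exactly \cite[Thm 4.17]{Khalil2002}, applied to the linear case where the Jacobian equals $\bA$ everywhere). Take $V(\bx) = (\bx - \bx^*)^\top \bP (\bx - \bx^*)$; this is smooth (in fact a polynomial), satisfies $V(\bx^*) = 0$, and satisfies $V(\bx) > 0$ for $\bx \neq \bx^*$ because $\bP \succ 0$.

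Next I would compute $\dot V$ along \eqref{eq:lin}. We have $\nabla V(\bx) = 2\bP(\bx - \bx^*)$, hence
\begin{equation}
\dot V(\bx) = 2(\bx-\bx^*)^\top \bP \bA (\bx - \bx^*) = (\bx-\bx^*)^\top (\bP\bA + \bA^\top \bP)(\bx - \bx^*) = -(\bx-\bx^*)^\top \bQ (\bx - \bx^*) \leq 0,
\end{equation}
with equality only at $\bx = \bx^*$ since $\bQ \succ 0$. This establishes \eqref{eq:lyapunov}. Finally, for forward invariance of $\Omega_c$: take any $\bx(0) \in \Omega_c$, i.e.\ $V(\bx(0)) \leq c$. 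Along the solution of \eqref{eq:lin}, the map $t \mapsto V(\bx(t))$ is differentiable with derivative $\dot V(\bx(t)) \leq 0$, so $V(\bx(t))$ is nonincreasing, whence $V(\bx(t)) \leq V(\bx(0)) \leq c$ for all $t \geq 0$; thus $\bx(t) \in \Omega_c$. (For a Hurwitz $\bA$ solutions exist globally in forward time, so no escape-in-finite-time issue arises.)

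Since the lemma is essentially a direct consequence of the cited converse Lyapunov theorem, there is no real obstacle; the only point requiring a little care is stating precisely which $\bQ$ one picks and noting that the quadratic $V$ produced is globally valid (the linear setting makes the converse theorem's local statement global). If one wanted to avoid even citing the integral formula, one could alternatively just \emph{postulate} the Lyapunov equation $\bA^\top \bP + \bP \bA = -\bI$, assert solvability and positive definiteness of $\bP$ from Hurwitzness of $\bA$ (a textbook fact), and proceed as above; this is the route I would actually take in the write-up, since it is self-contained and makes the quadratic structure — which the subsequent feedback-linearization construction in \eqref{eq:main_idea} relies on — completely explicit.
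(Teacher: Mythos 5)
You have given a correct and self-contained argument, but it does not match the paper's handling of this lemma, which is important to notice: the paper does not prove Lemma~\ref{lemma:1} at all; it is quoted from Khalil~\cite[Thm 4.17]{Khalil2002} and used as an existence statement, while the constructive CTLE content that you actually wrote out is what the paper states separately as Lemma~\ref{lemma:2}. So your proposal is really ``Lemma~\ref{lemma:2} plus the sublevel-set monotonicity argument,'' packaged as a proof of Lemma~\ref{lemma:1}.

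There is one genuine gap relative to the statement as written. Lemma~\ref{lemma:1} hypothesizes only that $\bx^*$ is \emph{stable}, whereas your argument requires $\bA$ to be \emph{Hurwitz}: the CTLE $\bA^\top\bP + \bP\bA = -\bQ$ with $\bQ \succ 0$ has a positive definite solution $\bP$ if and only if $\bA$ is Hurwitz, and the integral formula $\bP = \int_0^\infty e^{\bA^\top t}\bQ\,e^{\bA t}\,\mathrm{d}t$ diverges when $\bA$ has eigenvalues on the imaginary axis. A merely (Lyapunov-)stable $\bA$, e.g.\ a skew-symmetric matrix, still admits a quadratic Lyapunov function satisfying \eqref{eq:lyapunov} --- take $V(\bx) = \|\bx - \bx^*\|^2$, giving $\dot V \equiv 0$ --- but not via the CTLE with SPD $\bQ$. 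To prove the lemma under its stated hypothesis one must either relax to $\bQ \succeq 0$, argue blockwise from the real Jordan form, or simply cite the converse theorem as the paper does. In the paper's actual use, $\bA$ is always chosen Hurwitz, so your narrower proof covers everything that is invoked downstream; just be aware that as a proof of the lemma \emph{as stated} it silently strengthens the hypothesis. (Also, the integral solution of the CTLE is a standard Lyapunov-equation fact, typically Thm~4.6 in Khalil, not Thm~4.17, which is the converse theorem you are meant to cite for existence.)
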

Lemma \ref{lemma:1} states that invariant sets exist for the stable system \eqref{eq:lin}. To construct them, we use a quadratic Lyapunov function:
 \begin{equation} \label{eq:quadratic_Lyapunov}
    V(\bx)=(\bx-\bx^*)^\top \bP(\bx-\bx^*)\,
\end{equation}
with a positive definite matrix ${\bP \in \R^{n \times n}}$. This form of $V$ satisfies the positivity conditions in Lemma \ref{lemma:1}.
The time derivative of $V$ along the linear system \eqref{eq:lin} is:
\begin{equation}
    \dot{V}(\bx)=(\bx-\bx^*)^\top\big(\bA^\top \bP+\mathbf{PA}\big)(\bx-\bx^*)\,.
\end{equation}
If the matrix in the middle is negative semidefinite, then ${\dot{V}(\bx)\leq 0}$ holds for all ${\bx \in \R^n}$, and $\Omega_c$ in \eqref{eq:omega_c} is forward invariant per Lemma \ref{lemma:1}. This condition is enforced by satisfying the {\em continuous-time Lyapunov equation} (CTLE):
\begin{equation}
\bA^\top\bP+\mathbf{PA}=-\bQ\,,
    \label{eq:ctle}
\end{equation}
where $\bQ\in\R^{n\times n}$ is chosen to be a symmetric positive definite (SPD) matrix. 
The following lemma connects the stability of $\bA$ to the solvability of the CTLE.
\begin{lemma}[\!\!\cite{Khalil2002}] \label{lemma:2}
\textit{
If $\bA$ is Hurwitz and $\bQ$ is symmetric positive definite, the CTLE~\eqref{eq:ctle} has a unique symmetric positive definite solution for $\bP$. With this solution, $V$ in~\eqref{eq:quadratic_Lyapunov} satisfies \eqref{eq:lyapunov}, thus
$\Omega_c$ in \eqref{eq:omega_c} is forward invariant along \eqref{eq:lin}.
}
\end{lemma}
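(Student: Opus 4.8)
The plan is to prove Lemma~\ref{lemma:2} by chaining together the classical converse-Lyapunov construction for Hurwitz matrices with the already-established Lemma~\ref{lemma:1}. First I would establish existence and uniqueness of the solution $\bP$ to the CTLE~\eqref{eq:ctle}. Since $\bA$ is Hurwitz, I would exhibit the explicit candidate
\begin{equation}
    \bP = \int_0^\infty e^{\bA^\top t}\, \bQ\, e^{\bA t}\, \mathrm{d}t\,,
    \label{eq:P_integral}
\end{equation}
and verify convergence of the integral using the exponential decay bound $\|e^{\bA t}\| \leq M e^{-\lambda t}$ for some $M,\lambda>0$ that follows from $\bA$ being Hurwitz. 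Plugging \eqref{eq:P_integral} into $\bA^\top \bP + \bP \bA$ and recognizing the integrand as $\frac{\mathrm{d}}{\mathrm{d}t}\big(e^{\bA^\top t} \bQ\, e^{\bA t}\big)$, the fundamental theorem of calculus gives $\bA^\top \bP + \bP \bA = \big[e^{\bA^\top t} \bQ\, e^{\bA t}\big]_0^\infty = -\bQ$, so \eqref{eq:P_integral} solves the CTLE. Symmetry of $\bP$ is immediate from symmetry of $\bQ$. For uniqueness I would argue that the map $\bX \mapsto \bA^\top \bX + \bX \bA$ is linear, and its kernel is trivial: if $\bA^\top \bX + \bX \bA = 0$ then $\frac{\mathrm{d}}{\mathrm{d}t}\big(e^{\bA^\top t}\bX e^{\bA t}\big) = 0$, so $e^{\bA^\top t}\bX e^{\bA t}$ is constant in $t$; evaluating at $t=0$ and $t\to\infty$ (where it vanishes by Hurwitzness) forces $\bX = 0$.

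Next I would show $\bP$ is positive definite. For any $\bz \neq \bzero$, $\bz^\top \bP \bz = \int_0^\infty \big(e^{\bA t}\bz\big)^\top \bQ \big(e^{\bA t}\bz\big)\,\mathrm{d}t$; since $\bQ$ is SPD the integrand is strictly positive for every $t$ (as $e^{\bA t}\bz \neq \bzero$ because $e^{\bA t}$ is invertible), so the integral is strictly positive. With $\bP$ SPD in hand, $V$ in~\eqref{eq:quadratic_Lyapunov} satisfies $V(\bx^*)=0$ and $V(\bx)>0$ for $\bx \neq \bx^*$, i.e.\ the positivity conditions in~\eqref{eq:lyapunov}. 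For the derivative condition, I would compute $\dot V(\bx) = (\bx-\bx^*)^\top(\bA^\top\bP + \bP\bA)(\bx-\bx^*) = -(\bx-\bx^*)^\top \bQ (\bx-\bx^*) \leq 0$, using the CTLE~\eqref{eq:ctle}; this is exactly the $\dot V(\bx)\leq 0$ requirement of~\eqref{eq:lyapunov}.

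Finally, having verified that $V$ satisfies all hypotheses in~\eqref{eq:lyapunov}, I would invoke Lemma~\ref{lemma:1} directly to conclude that the sublevel sets $\Omega_c$ in~\eqref{eq:omega_c} are forward invariant along~\eqref{eq:lin} for every $c>0$, which completes the proof. Alternatively, forward invariance of $\Omega_c$ can be seen directly via Nagumo's theorem: on the boundary $\partial\Omega_c = \{V(\bx)=c\}$ the outward normal is $\nabla V(\bx)$ and $\nabla V(\bx)\cdot \bA(\bx-\bx^*) = \dot V(\bx) \leq 0$, so the vector field points inward (or is tangent).

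I expect the main obstacle to be purely technical rather than conceptual: justifying convergence and differentiation under the integral sign in~\eqref{eq:P_integral}, which rests on the exponential bound $\|e^{\bA t}\| \leq M e^{-\lambda t}$. This bound itself requires a small argument (e.g.\ via the Jordan form of $\bA$, noting that polynomial factors from nontrivial Jordan blocks are dominated by the decaying exponentials since all eigenvalues have strictly negative real part). Everything else is routine linear algebra, and much of the work is already delegated to Lemmas~\ref{lemma:1} and the standard reference~\cite{Khalil2002}, so the proof can be kept brief.
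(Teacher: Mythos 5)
Your proof is correct. Note, however, that the paper does not actually prove Lemma~\ref{lemma:2}; it is stated as a cited result (attributed to \cite{Khalil2002}) and used as a black box. What you have reconstructed is precisely the standard textbook argument from Khalil: existence via the integral candidate $\bP=\int_0^\infty e^{\bA^\top t}\bQ e^{\bA t}\,\mathrm{d}t$, convergence from the exponential decay bound implied by Hurwitzness, verification of the CTLE by the fundamental theorem of calculus, symmetry and positive definiteness by inspection of the integrand, uniqueness via the triviality of the kernel of the Lyapunov operator, and then delegating forward invariance of $\Omega_c$ to Lemma~\ref{lemma:1}. All steps are sound. One small remark: you could tighten $\dot V(\bx)\leq 0$ to the strict inequality $\dot V(\bx)<0$ for $\bx\neq\bx^*$ since $\bQ$ is positive definite, which in fact yields asymptotic (not merely Lyapunov) stability, but the weaker nonstrict inequality is all that~\eqref{eq:lyapunov} and Lemma~\ref{lemma:1} require, so your statement is exactly what is needed. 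The alternative Nagumo-based argument for invariance is a valid, slightly more self-contained route, but invoking Lemma~\ref{lemma:1} is the cleaner choice given the paper's structure.
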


With Lemma \ref{lemma:2}, we can construct forward invariant backup sets for \eqref{eq:lin} by setting ${\Sb \triangleq \Omega_c}$.
Thus $\Sb$ is given by~\eqref{eq:backupset} with:
\begin{equation} \label{eq:hb_simple}
    \hb(\bx)=c-V(\bx)=c-(\bx-\bx^*)^\top \bP(\bx-\bx^*)\,,
\end{equation} 
where one must choose a symmetric positive definite $\bQ$ (such as ${\bQ=\bI}$) and solve the CTLE \eqref{eq:ctle} to obtain $\bP$.
This gives an ellipsoid-shaped set as shown with blue in Fig.~\ref{fig:hb_position}.
The center of this set is the point ${\bx^* \in \Sb}$.
The size of the set depends on $c$ whereas its shape and orientation are determined by $\bP$.

To obtain a {\em valid} backup set, as in Definition~\ref{def:backup}, $\Sb$ must be forward invariant.
Therefore, based on Lemma~\ref{lemma:2}, $\bA$ must be Hurwitz (while $\bx^*$ and $c$ do not affect invariance).
Additionally, the backup set $\Sb$ must be a subset of the constraint set $\s$ in~\eqref{eq:safeset}.
Thus, its center $\bx^*$ must be in the interior of $\s$:
\begin{equation} \label{eq:equilibrium_safety}
    h(\bx^*)>0,
\end{equation}
while its size $c$ must be small enough.
We provide the suitable choices of $\bA$, $\bx^*$, and $c$ in the context of the backup controller. 

\subsection{Feedback linearization for backup controllers}
\label{sec:feedback_lin}

Having established the form of the backup set, now the backup controller needs to be developed. Evidently, the objective is to transform the original nonlinear system in \eqref{eq:system} into the linear one in \eqref{eq:lin}. This is achieved via \textit{feedback linearization}, that stabilizes the system around the equilibrium point $\bx^*$.

We first present feedback linearization under the following strong assumption which will be relaxed in the next subsection.
\begin{assumption} \label{assum:feedback_lin}
The system~\eqref{eq:system} is fully actuated (${n=m}$) and the input matrix $\bg(\bx)$ is invertible for all ${\bx \in \Rn}$.
\end{assumption}

With this assumption, the system~\eqref{eq:system} is feedback linearizable, and the feedback linearization controller is:
\begin{equation} \label{eq:feedback_lin_simple}
    \kfl(\bx) = \bg(\bx)^{-1} \big( -\bf(\bx) + \bA (\bx - \bx^*) \big).
\end{equation}
Note that this controller may violate the input bounds, i.e., there may exist ${\bx \in \Rn}$ such that ${\kfl(\bx) \notin \U}$.
The states where controller~\eqref{eq:feedback_lin_simple} satisfies the input bounds are given by: 
\begin{equation} \label{eq:no_saturation_set}
    \Sns = \big\{ \bx\in \Rn : \kfl(\bx) \in \U \big\}\,.
\end{equation}

As outlined in \eqref{eq:main_idea}, we establish the backup controller by saturating the feedback linearization controller\footnote{To ensure the smoothness of the backup controller $\kb$, a smooth counterpart of the saturation function could be used. For simplicity of exposition, we present our results for the nonsmooth saturation function.}:
\begin{equation} \label{eq:backup_controller}
\begin{split}
    \kb(\bx) &= \mathrm{sat}\big(\kfl(\bx) \big)\,, \\
    k_{\mathrm{b},i}(\bx) &=
    \begin{cases}
        u_{\max,i} &\mathrm{if}\,\,k_{\mathrm{FL},i}(\bx)>u_{\max,i}\,,\\
        k_{\mathrm{FL},i} (\bx)&\mathrm{if}\,\,u_{\min,i}\leq k_{\mathrm{FL},i}(\bx)\leq u_{\max,i}\,,\\
        u_{\min,i} &\mathrm{if} \,\, k_{\mathrm{FL},i}(\bx)<u_{\min,i}\,,
    \end{cases}
\end{split}
\end{equation}
for all ${i \in \{1, 2, \dots, m\}}$, with the saturation function $\mathrm{sat}$.
Note that ${\kb(\bx) = \kfl(\bx) \iff \bx \in \Sns}$.

Despite the saturation, the backup controller $\kb$ can render the backup set $\Sb$ forward invariant if the saturation happens outside $\Sb$.
This is formally stated by the following lemma.

\begin{lemma} \label{lem:Sb_forward_inv}
\textit{
Consider the system~\eqref{eq:system} with Assumption~\ref{assum:feedback_lin}, the backup set $\Sb$ given by~\eqref{eq:backupset} and~\eqref{eq:hb_simple}, the backup controller $\kb$ given by~\eqref{eq:feedback_lin_simple} and~\eqref{eq:backup_controller}, and the set $\Sns$ in~\eqref{eq:no_saturation_set}.
Let $\bA$ be Hurwitz and $\bP$ satisfy the CTLE~\eqref{eq:ctle}.
If ${\Sb \subseteq \Sns}$, then $\Sb$ is forward invariant along~\eqref{eq:f_b}.
}
\end{lemma}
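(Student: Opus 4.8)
The plan is to use the hypothesis ${\Sb \subseteq \Sns}$ to reduce, \emph{on the backup set}, the nonlinear closed loop \eqref{eq:f_b} to the stable linear system \eqref{eq:lin}, and then to transfer to \eqref{eq:f_b} the forward invariance of $\Omega_c = \Sb$ that Lemma~\ref{lemma:2} guarantees for \eqref{eq:lin}. First I would note that for every ${\bx \in \Sb \subseteq \Sns}$ the saturation in \eqref{eq:backup_controller} is inactive, by the definition \eqref{eq:no_saturation_set} of $\Sns$, so ${\kb(\bx) = \kfl(\bx)}$; substituting the feedback linearization controller \eqref{eq:feedback_lin_simple} into \eqref{eq:f_b} and using the invertibility of $\bg(\bx)$ from Assumption~\ref{assum:feedback_lin} gives the exact cancellation ${\fb(\bx) = \bf(\bx) + \bg(\bx)\bg(\bx)^{-1}(-\bf(\bx) + \bA(\bx - \bx^*)) = \bA(\bx - \bx^*)}$ for all ${\bx \in \Sb}$.

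Next I would study $\hb = c - V$ with $V$ the quadratic Lyapunov function in \eqref{eq:hb_simple}. Along $\fb$ restricted to $\Sb$, the cancellation above gives ${\dot{V}(\bx) = (\bx-\bx^*)^\top(\bA^\top\bP+\bP\bA)(\bx-\bx^*) = -(\bx-\bx^*)^\top\bQ(\bx-\bx^*) \le 0}$ by the CTLE~\eqref{eq:ctle}, hence ${\dot{\hb}(\bx,\fb(\bx)) = -\dot{V}(\bx) \ge 0}$ for all ${\bx \in \Sb}$, with \emph{strict} inequality on the boundary ${\partial\Sb = \{\bx : V(\bx)=c\}}$, because there ${V(\bx)=c>0}$ forces ${\bx \ne \bx^*}$ while $\bQ$ is positive definite. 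Since $\bf$, $\bg$, $\bg^{-1}$ are smooth and $\mathrm{sat}(\cdot)$ is Lipschitz, $\fb$ is locally Lipschitz, so \eqref{eq:f_b} has a unique solution from each initial state, to which the boundary argument below applies.

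Finally I would conclude forward invariance by a standard boundary (Nagumo-type) argument: given ${\bx(0) \in \Sb}$, if ${V(\bx(t_1)) > c}$ for some ${t_1 > 0}$, then at ${t_0 = \inf\{ t \in [0,t_1] : V(\bx(t)) > c\}}$ continuity forces ${V(\bx(t_0)) = c}$, i.e.\ ${\bx(t_0) \in \partial\Sb}$, and the strict inequality of the previous step gives ${\frac{\mathrm{d}}{\mathrm{d}t} V(\bx(t))\big|_{t_0} = -\dot{\hb}(\bx(t_0),\fb(\bx(t_0))) < 0}$, so $V(\bx(t)) < c$ for $t$ slightly larger than $t_0$, contradicting the definition of $t_0$; hence ${\bx(t) \in \Sb}$ for all ${t \ge 0}$. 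Equivalently, since ${\nabla\hb = -2\bP(\bx-\bx^*) \ne 0}$ on $\partial\Sb$, Nagumo's subtangentiality condition reduces to exactly the boundary inequality verified above; or one observes, by uniqueness, that the solution of \eqref{eq:f_b} from ${\bx(0)\in\Sb}$ coincides with that of the linear system \eqref{eq:lin} for as long as it remains in $\Sb$, and the latter never leaves $\Omega_c = \Sb$ by Lemma~\ref{lemma:2}. The main obstacle — and the reason the steps are organized this way — is that the linear reduction holds \emph{only} on $\Sb$ (outside $\Sns$ the saturation destroys the cancellation), so the linear system's global Lyapunov decrease cannot be quoted directly; the argument must be localized in state, which is possible precisely because $\dot{\hb}$ is \emph{strictly} positive on $\partial\Sb$ (it vanishes only at the interior center $\bx^*$), and this strict sign is in turn what the hypotheses $\bA$ Hurwitz, $\bQ$ positive definite, and ${\Sb \subseteq \Sns}$ deliver.
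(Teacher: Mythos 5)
Your proof is correct and follows essentially the same route as the paper: use $\Sb\subseteq\Sns$ to cancel the saturation on $\Sb$, observe that \eqref{eq:f_b} coincides with the linear system \eqref{eq:lin} there, and invoke Lemma~\ref{lemma:2}. You additionally fill in the localization step the paper leaves implicit --- namely, why it suffices that the linear reduction holds only on $\Sb$ --- via the Nagumo-type boundary argument (or, equivalently, the uniqueness-of-solutions observation), which is careful and correct but does not change the underlying approach.
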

\begin{proof}
Because ${\Sb \subseteq \Sns}$, we have ${\kb(\bx) = \kfl(\bx)}$ for all ${\bx \in \Sb}$.
Then, by substituting~\eqref{eq:feedback_lin_simple}, the backup system~\eqref{eq:f_b} simplifies to the linear system~\eqref{eq:lin} for all ${\bx \in \Sb}$.
Therefore, the backup set is forward invariant  based on Lemma~\ref{lemma:2}.
\end{proof}

According to Lemma~\ref{lem:Sb_forward_inv}, the backup set $\Sb$ can be rendered forward invariant if it is a subset of the set $\Sns$ associated with no saturation for the backup controller.
This is illustrated in Fig.~\ref{fig:hb_position}. 
In order to satisfy this condition, i.e., to ensure ${\Sb \subseteq \Sns}$, the center $\bx^*$ of $\Sb$ must be in the interior of $\Sns$:
\begin{equation} \label{eq:equilibrium_no_saturation}
    u_{\min,i}<k_{\mathrm{FL},i}(\bx^*)<u_{\max,i}\,, \quad
    \forall i \in \{1, 2, \dots, m\}\,;
\end{equation}
cf.~\eqref{eq:inputset} and~\eqref{eq:no_saturation_set}.
In other words, the backup controller must not saturate at the equilibrium point $\bx^*$.
We introduce the notation ${\bu^* \triangleq \kfl(\bx^*)}$ for the corresponding input, which satisfies:
\begin{equation}
    \bf(\bx^*)+\bg(\bx^*)\bu^*=\bzero\,.
    \label{eq:equilibrium}
\end{equation}


Overall, the equilibrium $\bx^*$ must be chosen so that both~\eqref{eq:equilibrium_safety} and~\eqref{eq:equilibrium_no_saturation} hold, that is, $\bx^*$ must lie inside the intersection of $\s$ and $\Sns$ as shown in Fig.~\ref{fig:hb_position}.
This necessitates that such a point $\bx^*$ exists in the first place, i.e., $\s$ and $\Sns$ must have an intersection.
This is stated formally by the following assumption.
Note that whether this assumption holds depends on $\bf$, $\bg$, and $h$ defining the safety-critical control problem, because~\eqref{eq:equilibrium_safety} contains the constraint function $h$, whereas~\eqref{eq:equilibrium_no_saturation} involves the dynamics $\bf$ and $\bg$ through ${\kfl(\bx^*) = -\bg(\bx^*)^{-1}\bf(\bx^*)}$.

\begin{assumption} \label{assum:nonempty}
For the system~\eqref{eq:system} and the safety constraint~\eqref{eq:safeset}, there exists ${\bx^* \in \R^n}$ satisfying~\eqref{eq:equilibrium_safety} and~\eqref{eq:equilibrium_no_saturation}.
\end{assumption}

The next theorem states our main result that the proposed construction leads to a valid backup set-backup controller pair, and summarizes the criteria for choosing $\bA$, $\bP$, $\bx^*$, and $c$.

\begin{theorem}
\textit{
Consider the system~\eqref{eq:system} with Assumption~\ref{assum:feedback_lin}, the constraint set $\s$ in~\eqref{eq:safeset} with Assumption~\ref{assum:nonempty}, the backup set $\Sb$ given by~\eqref{eq:backupset} and~\eqref{eq:hb_simple}, the backup controller $\kb$ given by~\eqref{eq:feedback_lin_simple} and~\eqref{eq:backup_controller}, and the set $\Sns$ in~\eqref{eq:no_saturation_set}.
Let ${\bA \in \R^{n \times n}}$, ${\bP \in \R^{n \times n}}$, ${\bx^* \in \R^n}$, and ${c>0}$ be chosen such that:
\begin{itemize}
    \item $\bA$ is Hurwitz,
    \item $\bP$ is the solution of~\eqref{eq:ctle} with a SPD matrix ${\bQ \in \R^{n \times n}}$,
    \item $\bx^*$ satisfies~\eqref{eq:equilibrium_safety} and~\eqref{eq:equilibrium_no_saturation},
    \item $c$ is small enough so that ${\Sb \subseteq \s}$ and ${\Sb \subseteq \Sns}$.
\end{itemize}
Then $\Sb$ and $\kb$ are a valid backup set-backup controller pair, satisfying conditions (C1)-(C3) in Definition~\ref{def:backup}.
}
\end{theorem}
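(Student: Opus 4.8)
The plan is to verify conditions (C1)--(C3) of Definition~\ref{def:backup} one at a time, since each one is essentially already packaged in an earlier lemma or in the hypotheses. Condition (C2) is the most immediate: by the componentwise construction in~\eqref{eq:backup_controller}, each $k_{\mathrm{b},i}(\bx)$ is $k_{\mathrm{FL},i}(\bx)$ clipped to $[u_{\min,i},u_{\max,i}]$, so $u_{\min,i}\le k_{\mathrm{b},i}(\bx)\le u_{\max,i}$ for every $\bx\in\Rn$ and every $i$; comparing with~\eqref{eq:inputset}, this is exactly $\kb(\bx)\in\U$. The only thing needed here beyond the definition of $\mathrm{sat}$ is that $\kfl$ in~\eqref{eq:feedback_lin_simple} is well defined, which holds because $\bg(\bx)$ is invertible by Assumption~\ref{assum:feedback_lin}. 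Condition (C1), $\Sb\subseteq\s$, then holds directly: it is one of the stated requirements on $c$.

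Condition (C3) I would obtain by invoking Lemma~\ref{lem:Sb_forward_inv} verbatim. Its hypotheses are that $\bA$ is Hurwitz, that $\bP$ solves the CTLE~\eqref{eq:ctle}, and that $\Sb\subseteq\Sns$ --- all of which are among the hypotheses of the theorem (the last being part of the requirement that $c$ is small enough). Hence $\kb$ renders $\Sb$ forward invariant along~\eqref{eq:f_b}. In the write-up I would recall the mechanism for completeness: Lemma~\ref{lemma:2} gives that $\bP$ is SPD and that $\Omega_c=\Sb$ is forward invariant along the linear system~\eqref{eq:lin}; and since $\Sb\subseteq\Sns$, the saturation in~\eqref{eq:backup_controller} is inactive on $\Sb$, so $\kb=\kfl$ there and the backup closed loop~\eqref{eq:f_b} reduces to~\eqref{eq:lin} on $\Sb$, which is therefore invariant.

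To make the theorem non-vacuous I would also record why the requirement ``$c$ small enough so that $\Sb\subseteq\s$ and $\Sb\subseteq\Sns$'' is satisfiable, which is where Assumption~\ref{assum:nonempty} and the strictness of the inequalities in~\eqref{eq:equilibrium_safety}--\eqref{eq:equilibrium_no_saturation} enter. Since $\bP$ is SPD, $\Omega_c=\{\bx\in\Rn:(\bx-\bx^*)^\top\bP(\bx-\bx^*)\le c\}$ is a nested family of compact ellipsoids centered at $\bx^*$ that shrink to $\{\bx^*\}$ as $c\to0^+$ (e.g.\ $\Omega_c$ is contained in the ball of radius $\sqrt{c/\lambda_{\min}(\bP)}$ about $\bx^*$). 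On the other hand, $h$ is continuous and $\kfl$ is continuous (because $\bf,\bg$ are smooth with $\bg$ invertible), so the set $\{\bx\in\Rn:h(\bx)>0\}\cap\{\bx\in\Rn:u_{\min,i}<k_{\mathrm{FL},i}(\bx)<u_{\max,i},\ \forall i\}$ is open and, by~\eqref{eq:equilibrium_safety}--\eqref{eq:equilibrium_no_saturation}, contains $\bx^*$; it is a subset of $\s\cap\Sns$. Thus some ball about $\bx^*$ lies inside it, and any $c>0$ for which $\Omega_c$ fits inside that ball gives $\Sb\subseteq\s$ and $\Sb\subseteq\Sns$.

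I do not expect a real obstacle: the theorem is a consolidation of Lemmas~\ref{lemma:2} and~\ref{lem:Sb_forward_inv} with the definition of the saturation. The one point deserving care --- and the step I would spell out most carefully --- is the topological argument above that a suitable $c$ exists, where the \emph{strict} inequalities in~\eqref{eq:equilibrium_safety}--\eqref{eq:equilibrium_no_saturation} (guaranteed by Assumption~\ref{assum:nonempty}) are essential: if $\bx^*$ sat on the boundary of $\s$ or of $\Sns$, no positive $c$ would keep $\Omega_c$ inside $\s\cap\Sns$.
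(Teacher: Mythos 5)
Your proposal is correct and follows essentially the same structure as the paper's proof: (C1) is immediate from the hypothesis on $c$, (C2) follows from the saturation in~\eqref{eq:backup_controller}, (C3) is Lemma~\ref{lem:Sb_forward_inv}, and the strict inequalities in~\eqref{eq:equilibrium_safety}--\eqref{eq:equilibrium_no_saturation} guarantee that a suitable $c>0$ exists. Your topological argument for the last point (compact nested ellipsoids shrinking to $\{\bx^*\}$, which lies in the open set $\{h>0\}\cap\mathrm{int}\,\Sns$) makes explicit a step the paper states in a single sentence, but it is the same reasoning.
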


\begin{proof}
Condition (C1) holds as $c$ is chosen such that ${\Sb \subseteq \s}$.
Condition (C2) is satisfied by the construction of $\kb$, due to the saturation in~\eqref{eq:backup_controller}.
Condition (C3) holds because of Lemma~\ref{lem:Sb_forward_inv}, provided that $\bA$ is Hurwitz, $\bP$ satisfies~\eqref{eq:ctle}, and $c$ is chosen so that ${\Sb \subseteq \Sns}$.
Since $\bx^*$ satisfies~\eqref{eq:equilibrium_safety} and~\eqref{eq:equilibrium_no_saturation}, it ensures that there exists ${c>0}$ such that ${\Sb \subseteq \s}$ and ${\Sb \subseteq \Sns}$.
\end{proof}

\subsection{Design considerations and practical guidelines}

The overall procedure of input-constrained safety-critical control is summarized in Fig.~\ref{fig:block_method}.
First, we design the proposed backup set, as in~\eqref{eq:backupset} and~\eqref{eq:hb_simple}, and backup controller, as in~\eqref{eq:feedback_lin_simple} and~\eqref{eq:backup_controller}, through the following steps.
\begin{enumerate}
    \item Choose an equilibrium $\bx^*$ satisfying~\eqref{eq:equilibrium_safety} and~\eqref{eq:equilibrium_no_saturation}.
    \item Choose a Hurwitz $\bA$.
    \item Choose a SPD $\bQ$ and solve the CTLE~\eqref{eq:ctle} for $\bP$.
    \item Choose $c$ so that ${\Sb \subseteq \s}$ and ${\Sb \subseteq \Sns}$.
\end{enumerate}
These steps are performed only once (offline).
Then, at each time moment during safety-critical control (online), we proceed with the steps of the backup set method as in~\cite{gurriet2018online}.
\begin{enumerate}
    \setcounter{enumi}{4}
    \item Solve the initial value problem~\eqref{eq:IVP}.
    \item Solve the backup CBF-QP in~\eqref{eq:QP2}.
\end{enumerate}

\begin{figure}[t!]
\centerline{\includegraphics[width=0.5\textwidth]{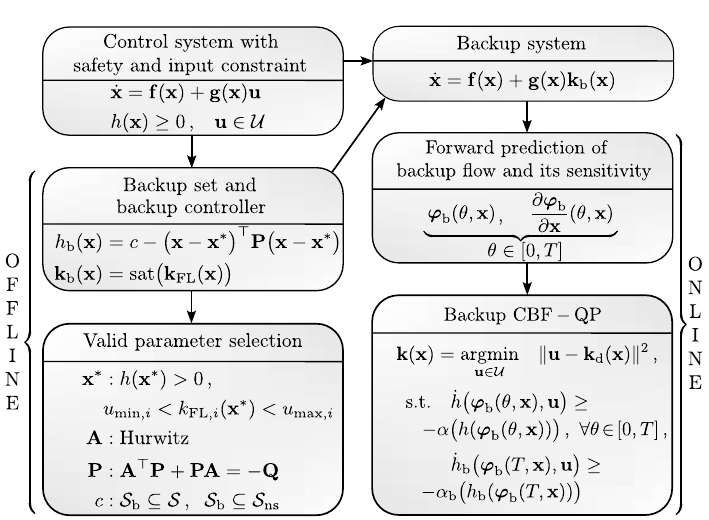}}
\vspace{-2mm}
\caption{Flow chart of the proposed method, summarizing the steps to generate valid backup set-backup controller pairs for use in the backup set method, which ultimately enables input-constrained safety-critical control.}
\vspace{-4mm}
\label{fig:block_method}
\end{figure}

While these steps provide a constructive framework for synthesizing a valid $(\Sb, \kb)$ pair, they also offer considerable design freedom. The choice of the parameters $(\bx^*, \bA, \bQ, c)$ collectively determine the final invariant set $\SI$.
The parameter selection is guided by the following principles and trade-offs.

Typically, the practical goal is to select these parameters so that they yield the largest possible set $\SI$, as this translates into a less conservative controller.
The size and shape of $\SI$ is determined by the interplay of the backup controller $\kb$, backup set $\Sb$, and prediction horizon $T$. Since a long horizon is computationally expensive and sensitive to model errors, it is practically advantageous to construct the largest possible $\Sb$ and obtain a large $\SI$ with a short and robust prediction.

To enlarge $\Sb$, the equilibrium point $\bx^*$ should be chosen strategically to center the backup set. A choice deep within the intersection of the constraint set $\s$ and the no-saturation region $\Sns$ maximizes the available space for $\Sb$ to be expanded.
A minimal-effort steady state is often a good candidate.

The system matrix $\bA$ dictates the desired closed-loop dynamics and presents a key trade-off. While an ''aggressive'' $\bA$ (with fast poles) promises rapid stabilization, the total control effort of $\kfl$ in~\eqref{eq:feedback_lin_simple}
may become large. This shrinks the no-saturation region $\Sns$, thus limiting the maximum possible size of $\Sb$. The optimal choice of $\bA$ is often system-dependent, balancing the desired convergence rate with having low total control effort over a large region.
The matrix $\bQ$ then shapes the geometry of the resulting ellipsoid $\Sb$ via the CTLE in~\eqref{eq:ctle}.

Finally, the choice of the scalar $c$ can maximize the size of the backup set $\Sb$ while satisfying the conditions ${\Sb \subseteq \s}$ and ${\Sb \subseteq \Sns}$. The maximal $c$ is found where the boundary of the ellipsoid $\Sb$ becomes tangent to the boundary of either $\s$ or $\Sns$. This geometric condition can be formulated as a constrained optimization problem and solved using the method of Lagrange multipliers~\cite{bertsekas2014constrained}. The core idea, that the gradients of the respective surfaces are parallel at the point of tangency, leads to a system of nonlinear algebraic equations. A similar principle of finding the largest invariant set via tangency conditions was applied to systems with time delays in~\cite{Kiss2021}.

To illustrate the full control synthesis procedure, we present a simple example: a fully-actuated scalar system.

\begin{example} \label{ex:scalar}

Let us consider the following scalar system:
\begin{equation}
    \dot{x}=x^3 + u\,,
    \label{eq:ex1}
\end{equation}
with ${x \in \R}$, ${u \in \R}$, ${f(x)=x^3}$, and ${g(x)=1}$.
We intend to keep the state in the interval ${-1\leq x\leq 1}$.
This leads to the constraint set $\s$ in \eqref{eq:safeset} with ${h(x)=1-x^2}$.
We consider the input bounds ${u_{\min} \leq u \leq u_{\max}}$ with ${u_{\min}=-0.5}$ and ${u_{\max}=0.75}$.
We seek to design a safe controller $k$ by modifying the desired controller ${k_{\mathrm{d}}(x)=0}$.
Note that $k_{\mathrm{d}}$ would let the state go unbounded and leave the constraint set if ${x(0) \neq 0}$.

The system~\eqref{eq:ex1} satisfies Assumption~\ref{assum:feedback_lin}, thus we proceed to design a backup set and a backup controller with the proposed method.
The backup set is given by:
\begin{equation}
    \hb(x)=c-P\left(x-x^*\right)^2\,,
    \label{eq:ex1_hb}
\end{equation}
with ${c>0}$ and ${P>0}$.
The backup controller, that linearizes the system around the equilibrium point $x^*$, becomes:
\begin{equation}
    k_\mathrm{b}(x)=\mathrm{sat}\big(-x^3-K(x-x^*) \big)\,,
    \label{eq:ex1_kb}
\end{equation}
with a gain ${K>0}$.
The backup controller is saturated due to the input constraint:
\begin{equation}
    \mathrm{sat}(u)=\begin{cases}
    u_{\max} &\mathrm{if}\,\,u>u_{\max}\,,\\
    u &\mathrm{if}\,\,u_{\min}\leq u\leq u_{\max}\,,\\
    u_{\min} &\mathrm{if} \,\, u<u_{\min}\,.
    \end{cases}
\end{equation}

\begin{figure}
\centering
\includegraphics[width=0.91\linewidth]{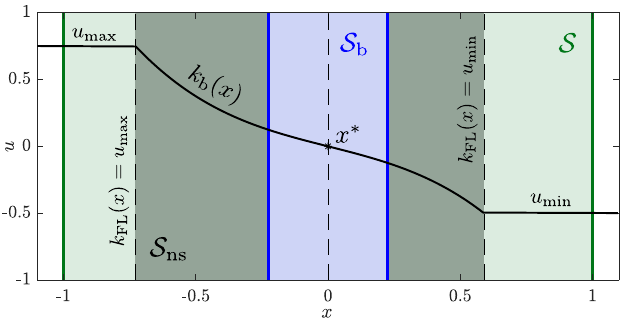}
\vspace{-2mm}
\caption{Illustration of the constraint set $\s$ (green), backup set $\Sb$ (blue), backup controller $k_{\mathrm{b}}$ (black), and input constraints $\Sns$ (gray) for Example~\ref{ex:scalar}.}
\vspace{-4mm}
\label{fig:setup}
\end{figure}

Now the parameters $c$, $P$, $x^*$, and $K$ need to be chosen.
We choose the equilibrium $x^*$ to be inside the constraint set with the corresponding input $u^*$ within the saturation limits:
\begin{equation}
    -1 < x^* < 1\,, \quad
    u_{\min} < u^* = -(x^*)^3 < u_{\max}\,.
\label{eq:intersection}
\end{equation}
This can be satisfied only if the two domains in~\eqref{eq:intersection} have an intersection, as stated in Assumption~\ref{assum:nonempty}.
Since ${u_{\min}<1}$ and ${u_{\max}>-1}$, this assumption holds.
By identifying ${A=-K}$, we may choose any positive gain ${K>0}$ to obtain a Hurwitz ${A<0}$.
Then, we solve the CTLE, ${AP \!+\! PA = -2KP = -Q}$, with ${Q=1}$ to obtain ${P=1/(2K)}$.
As for parameter $c$, it must be tuned so that the backup set is inside both the constraint set and the region where the backup controller does not saturate. 

One possible choice of backup set and backup controller is shown in Fig.~\ref{fig:setup}.
The equilibrium point is ${x^*=0}$, which yields ${u^*=0}$ and satisfies the conditions in \eqref{eq:intersection}.
The remaining parameters are set to ${K=0.5}$ and ${c=0.05}$, thus ${A=-0.5}$ and ${P=1}$.
Note that the backup set could be further expanded by increasing parameter $c$ up until reaching one of the boundaries of $\Sns$, given by $k_\mathrm{FL}(x)=u_{\min}$ and $k_\mathrm{FL}(x)=u_{\max}$ (black dashed lines). Beyond these bounds, the backup controller is saturated according to \eqref{eq:ex1_kb} as can be seen in Fig.~\ref{fig:setup}.
    
\begin{figure}
\centering
\includegraphics[width=0.99\linewidth]{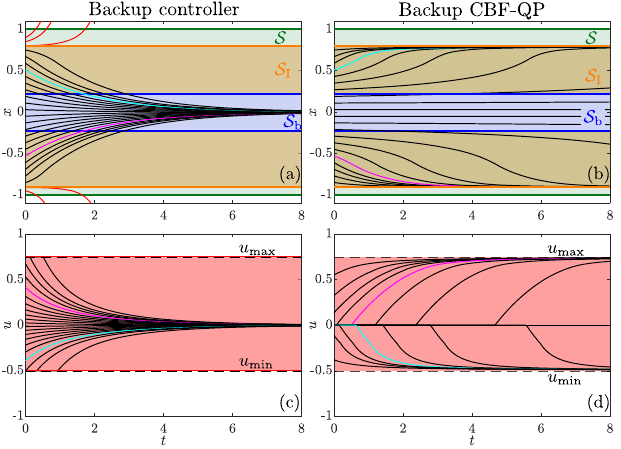}
\vspace{-7mm}
\caption{Trajectories of the system \eqref{eq:ex1} under the backup controller (a), and the backup CBF-QP controller (b) with the corresponding control input (c) and (d). The backup controller, given by \eqref{eq:ex1_kb}, maintains the forward invariance of the backup set $\Sb$ (blue) from \eqref{eq:ex1_hb}. The backup CBF-QP, given by \eqref{eq:QP2}, ensures the forward invariance of the expanded set $\SI$ (orange) that is within the constraint set $\s$ (green). Both controllers satisfy the input constraints.}
\vspace{-4mm}
\label{fig:backup_tr}
\end{figure}

Figure \ref{fig:backup_tr} shows numerical simulations of the system~\eqref{eq:ex1}.
In the left panels, the backup controller~\eqref{eq:ex1_kb} is directly executed.
Panel (a) depicts trajectories with various initial conditions within the constraint set $\s$. 
Trajectories launched from the backup set $\Sb$ remain in this set for all time, indicating forward invariance.
Furthermore, a larger set $\SI$ is also forward invariant, as indicated by the trajectories that start from and continue to evolve in this set (black trajectories).
The remaining trajectories (red trajectories) starting from ${\s\setminus\SI}$ leave the constraint set.
Panel (c) shows the corresponding inputs which satisfy the input constraints.
The magenta and cyan curves highlight two selected trajectories to show the correspondence between state and input.
Overall, the simulation results confirm that the backup set-backup controller pair is valid.

The right panels of Fig.~\ref{fig:backup_tr} show the behavior of the system~\eqref{eq:ex1} with the backup CBF-QP controller in~\eqref{eq:QP2}, which modifies the desired control input to a safe input while taking the backup flow into account.
This controller is designed with horizon $T=4$ discretized into ${N_\mathrm{c}=40}$ moments, and with class-$\K$ functions ${\alpha(h) = 0.5 h}$ and ${\alpha_{\mathrm{b}}(h_{\rm b}) = 0.25 h_{\rm b}}$.
Panel (b) highlights that the controller successfully keeps trajectories inside the invariant set $\SI$.
Moreover, while the backup controller drives the trajectories into the backup set in panel (a), the safety-critical controller in panel (b) is less conservative and allows trajectories to approach the boundary of the set $\SI$.
This highlights an important implication of the backup set method: the backup controller may not be used directly, but is rather a tool to implicitly define the invariant set $\SI$, which is then incorporated into the optimization problem in~\eqref{eq:QP2} to ensure feasibility.
Panel (d) shows that the proposed controller satisfies the input bounds, ultimately resulting in a safe and feasible control design.
\end{example}

\subsection{Generalization to output coordinates}
\label{sec:output_formulation}

The backup set-backup controller pair above is established for a certain class of feedback linearizable systems satisfying Assumption~\ref{assum:feedback_lin}, where the number of states and inputs is the same (${n=m}$) and the input matrix ${\bg(\bx)}$ is invertible.
Since this assumption is restrictive, now we relax it and extend our method to a wider class of systems.

In particular, we perform feedback linearization for a set of output coordinates $\by(\bx)$ rather than for the full state $\bx$.
We construct backup controllers that make the output dynamics linear and we create forward invariant backup sets based on the output.
For this construction, the notion of relative degree needs to be introduced for output coordinates. 

\begin{definition}{(\!\!\cite{isidori1985nonlinear})} The smooth function $\by : \Rn\to \Rp$ is said to have \textit{relative degree} $r\in\mathbb{N}$ with respect to \eqref{eq:system} if:
\begin{equation}
    \begin{aligned}
        (\mathrm{i})& \hspace{1cm} L_\bg L_\bf^i \by(\bx) =\bzero\,, \quad \forall i\in \{0,\dots,r-2 \}\,,\\
        (\mathrm{ii})& \hspace{1cm} \mathrm{rank}\big(L_\bg L_\bf^{r-1} \by(\bx)\big) =p\,,
    \end{aligned}
    \label{eq:reldeg}
\end{equation}
hold\footnote{The Lie derivative of $\by$ along $\bf$ is $L_\bf \by(\bx)\triangleq \nabla \by(\bx) \cdot \bf (\bx)$, while higher-order Lie derivatives are defined recursively: $L_\bf^i \by(\bx)\triangleq \nabla \big(L_\bf^{i-1} \by(\bx)\big)\cdot \bf(\bx)$.} for all ${\bx\in\Rn}$.
\end{definition}

With this definition, we replace Assumption~\ref{assum:feedback_lin} with a milder assumption that requires an output with a valid relative degree.

\begin{assumption} \label{assum:output_reldeg}
For the system~\eqref{eq:system}, there exists a smooth output ${\by: \Rn \to \Rp}$ with relative degree $r$, so that the number of outputs and inputs is the same (${p=m}$).
\end{assumption}

This means that $L_\bg L_\bf^{r-1} \by(\bx)$ is an invertible square matrix, as opposed to Assumption~\ref{assum:feedback_lin} where $\bg(\bx)$ is invertible.
Later we demonstrate in an example that this is a much milder condition.

\begin{figure*}
\centering
\includegraphics[width=0.925\textwidth]{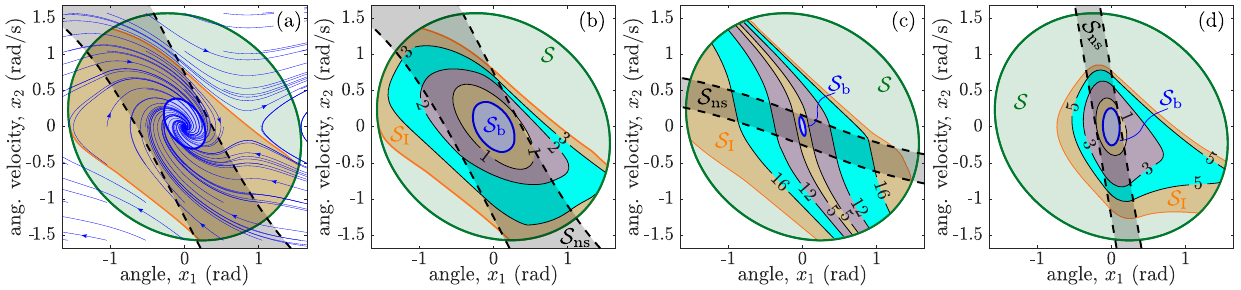}
\vspace{-2mm}
\caption{
Illustration of the constraint set $\s$ (green), backup set $\Sb$ (blue), input constraints $\Sns$ (gray), largest invariant set $\SI$ (orange), and invariant sets for various integration times (remaining filled contours) for Example~\ref{ex:pendulum}.
The effect of the backup controller gains on the shape of these sets is demonstrated for three different gain combinations: (a)-(b) ${K_1=1}$, ${K_2=1}$, ${c=0.1}$,
(c) ${K_1=1}$, ${K_2=5}$, ${c=0.0025}$, and
(d) ${K_1=5}$, ${K_2=1}$, ${c=0.04}$.}
\vspace{-4mm}
\label{fig:ex2_sets}
\end{figure*}

When the output $\by$ has relative degree $r$,
we take its derivatives and define a vector ${\boeta \in \R^{rp}}$:
\begin{equation}
    \boeta = \begin{bmatrix}
        \boeta_1\\
        \boeta_2\\
        \vdots\\
        \boeta_r
    \end{bmatrix}\triangleq \begin{bmatrix}
        \by(\bx) - \by(\bx^*)\\
        L_\bf \by(\bx)\\
        \vdots\\
        L_\bf^{r-1} \by(\bx)
    \end{bmatrix} \,,
\end{equation}
where ${\by(\bx^*)}$ is an equilibrium output with parameter $\bx^*$ and ${\boeta_{i+1} = L_\bf^i \by(\bx)}$ denotes the $i$-th time derivative of the output, ${i \in \{1, 2, \dots, r-1\}}$.
This results in the output dynamics:
\begin{equation}
    \dot{\boeta}=\begin{bmatrix}
        \dot{\boeta}_1\\
        \dot{\boeta}_2\\
        \vdots\\
        \dot{\boeta}_r
    \end{bmatrix}=\begin{bmatrix}
        \boeta_2\\
        \boeta_3\\
        \vdots\\
        L_\bf^{r} \by(\bx)+L_\bg L_\bf^{r-1} \by(\bx)\bu
    \end{bmatrix}\,.
    \label{eq:normal_form}
\end{equation}

The feedback linearization controller for~\eqref{eq:normal_form} becomes \cite{isidori1985nonlinear}:
\begin{equation}
    \kfl(\bx)=\big(L_\bg L_\bf^{r-1} \by(\bx)\big)^{-1}\bigg(-L_\bf^{r} \by(\bx) - \sum_{i=1}^{r}\bK_i\boeta_i\bigg)\,,
    \label{eq:feedback_lin}
\end{equation}
where $\bK_i$ are control gains.
Note that the decoupling matrix $L_\bg L_\bf^{r-1} \by(\bx)$ is invertible based on Assumption~\ref{assum:output_reldeg}.
With this controller, the closed-loop output dynamics are:
\begin{equation} \label{eq:output_closedloop}
    \dot{\boeta}=\underbrace{\begin{bmatrix}
        \bzero & \bI & \bzero & \dots & \bzero\\
        \bzero & \bzero & \bI & \dots & \bzero\\
        \vdots & \vdots & \vdots & \ddots & \vdots\\
        \bzero & \bzero & \bzero & \dots & \bI\\
        -\bK_1 & -\bK_2 & -\bK_3 & \dots & -\bK_r
    \end{bmatrix}}_{\bA}\boeta\,,
\end{equation}
where the matrix $\bA$ can be made Hurwitz by designing the gains $\bK_i$, for example, using pole placement.


Then, we propose to use the backup controller in~\eqref{eq:backup_controller} by saturating the feedback linearization controller in~\eqref{eq:feedback_lin}.
The corresponding backup set becomes:
\begin{equation}
    \hb(\bx)=c-\boeta^\top\bP\boeta\,,
    \label{eq:hb_finalform}
\end{equation}
cf.~\eqref{eq:hb_simple}.
This construction recovers the backup controller and backup set in Section~\ref{sec:feedback_lin} as a special case where the output is ${\by(\bx) = \bx}$ and it has relative degree one (i.e., ${\boeta = \bx-\bx^*}$).

To make this backup set-backup controller pair valid, one shall proceed with the following steps:
\begin{enumerate}
     \item Find an $m$-dimensional output $\by$ with a relative degree.
    \item Choose an equilibrium $\bx^*$ satisfying~\eqref{eq:equilibrium_safety} and~\eqref{eq:equilibrium_no_saturation}.
    \item Design the gains $\bK_i$ so that $\bA$ in~\eqref{eq:output_closedloop} is Hurwitz.
    \item Choose a SPD $\bQ$ and solve the CTLE~\eqref{eq:ctle} for $\bP$.
    \item Choose $c$ so that ${\Sb \subseteq \s}$ and ${\Sb \subseteq \Sns}$.
\end{enumerate}
%
%
In the following example we demonstrate this procedure with emphasis on the choice of the output coordinate.



\begin{example}
\label{ex:pendulum}
Consider the model of an inverted pendulum:
\begin{align}
    \begin{bmatrix}
        \dot{x}_1\\
        \dot{x}_2
    \end{bmatrix}=\begin{bmatrix}
        x_2\\
        \sin(x_1)
    \end{bmatrix}+
    \begin{bmatrix}
        0\\1
    \end{bmatrix}u\,,
    \label{eq:pend_din}
\end{align}
where the state $\bx=[x_1,\, x_2]^{\top}$ contains the angle $x_1$ measured from the upright position and the angular velocity $x_2$, while the control input $u$ is proportional to a torque generated by a motor.
Our goal is to keep the inverted pendulum above the horizontal position by
satisfying ${-\pi/2 \leq x_1 \leq \pi/2}$. We consider the following constraint function from \cite{cohen2024safety,gacsi2025activated}:
\begin{equation}
    h(\bx) = \left(\frac{\pi}{2}\right)^2 - x_1^2 -\frac{1}{2\mu}\left(x_2 + Kx_1\right)^2\,,
    \label{eq:pendCBF}
\end{equation}
with parameters ${\mu>0}$, ${K>0}$, resulting in a rotated ellipse around the origin as constraint set.
Note that $h$ is a valid CBF without input bounds (for an appropriate ${\alpha \in \Ke}$)~\cite{cohen2024safety,gacsi2025activated}.
The difficulty in realizing a safety-critical controller is, however, the saturation of the input at ${u_{\min} \leq u \leq u_{\max}}$.
We address this challenge for ${u_{\min}=-0.75}$ and ${u_{\max}=1.25}$.
We design a safe controller by modifying the desired controller $k_{\mathrm{d}}(\bx)=0$ that would let the pendulum fall.

We construct a backup set-backup controller pair using the proposed method.
Note that the inverted pendulum model~\eqref{eq:pend_din} does not satisfy Assumption~\ref{assum:feedback_lin} as it has more states than inputs.
Thus, we shall select an output coordinate with a valid relative degree for feedback linearization, according to Section~\ref{sec:output_formulation}.
While choosing ${y(\bx)=x_2}$ would yield relative degree one, the resulting backup set would depend only on $x_2$ and could not to be a subset of the constraint set given by~\eqref{eq:pendCBF}.
Instead, the right choice is ${y(\bx)=x_1}$, which has relative degree two with
${L_\bf y(\bx)=x_2}$,
${L_\bg y(\bx)=0}$,
${L_\bf^2 y(\bx)=\sin(x_1)}$, and
${L_\bg L_\bf y(\bx)=1}$.
This choice satisfies Assumption~\ref{assum:output_reldeg}.

Then, with ${\boeta = [x_1-x_1^*,\, x_2]^{\top}}$, we construct the backup set in~\eqref{eq:hb_finalform}.
This is associted with the backup controller:
\begin{equation}
        k_\mathrm{b}(\bx) = \mathrm{sat} \big( -\sin(x_1) - K_1 (x_1-x_1^*) - K_2 x_2 \big) \,,
        \label{eq:pend_kb}
\end{equation}
and the corresponding linear dynamics:
\begin{equation} \label{eq:pend_linear_dyn}
    \begin{bmatrix}
        \dot{x}_1\\
        \dot{x}_2
    \end{bmatrix}=\underbrace{\begin{bmatrix}
        0 & 1\\
        -K_1 & -K_2
    \end{bmatrix}}_{\bA} \begin{bmatrix}
        x_1 - x_1^*\\x_2
    \end{bmatrix}\,.
\end{equation}

Next, we choose the parameters $c$, $\bQ$, $\bx^*$, $K_1$, and $K_2$.
We look for the equilibrium point $\bx^*$ by setting the right-hand side of \eqref{eq:pend_din} equal to zero. Solving the algebraic equations we get:
\begin{equation}
    u^* = -\sin(x_1^*)\,, \quad
    x_2^* = 0\,.
    \label{eq:EP_pend}
\end{equation}
The equilibrium point $\bx^*$ must lie inside the safe set, while the corresponding input must not saturate:  
\begin{equation}
\begin{aligned}
    \left(\frac{\pi}{2}\right)^2 - \big(x_1^*\big)^2 -\frac{1}{2\mu}\left(x_2^* + Kx_1^*\right)^2> 0\,, \\
    u_{\min} < u^* = - \sin(x_1^*) < u_{\max}.
    \label{eq:pend_eq}    
\end{aligned}
\end{equation}
%
This can be satisfied, for example, by ${x_1^*=0}$, ${x_2^*=0}$, ${u^*=0}$.
Then we design the gains $K_1$ and $K_2$ so that matrix $\bA$ in~\eqref{eq:pend_linear_dyn} is Hurwitz.
This holds for any ${K_1,K_2>0}$.
The matrix $\bP$ can be constructed by solving the CTLE in \eqref{eq:ctle} with ${\bQ=\bI}$:
\begin{equation}
    \bP=\begin{bmatrix}
        \frac{K_1(K_1+1)+K_2^2}{2K_1K_2} & \frac{1}{2K_1} \vspace{1mm} \\ 
        \frac{1}{2K_1} & \frac{K_1+1}{2K_1K_2}
    \end{bmatrix}\,.
    \label{eq:pend_P}
\end{equation}
Finally, parameter $c$ is tuned so that the backup controller does not sature over the backup set that is inside the constraint set.

Figure \ref{fig:ex2_sets} shows the constraint set $\s$, the backup set $\Sb$, and the no-saturation region of the backup controller $\Sns$ for ${K=0.15}$, ${\mu=(1-K^2)/2}$, and various choices of $K_1$, $K_2$, and $c$.
The choice of the gains $K_1$ and $K_2$ affects the shape of the sets $\Sb$ and $\Sns$.
For each gain, parameter $c$ is adjusted to ensure that the backup set lies inside the other sets.
This way, the backup set-backup controller pair is valid in each case.

Figure~\ref{fig:ex2_sets} also presents numerical simulations of the inverted pendulum~\eqref{eq:pend_din} with the backup controller~\eqref{eq:pend_kb}.
Panel (a) depicts the evolution of the backup flow for various initial conditions.
The trajectories starting from the orange set safely reach the backup set without leaving the constraint set.
Thus, the orange set represents the largest possible invariant set $\SI$ defined in~\eqref{eq:SI} with infinite integration time (${T \to \infty}$).
Panels (b)-(d) also show level sets corresponding to various finite integration times (see colorful areas).
Starting from these sets, the trajectories reach the backup set in finite time $T$, which is quantified by the numbers in seconds.
The dependency of the invariant sets on the backup controller gains is apparent.

\begin{figure}
\centering
\includegraphics[width=0.4\textwidth]{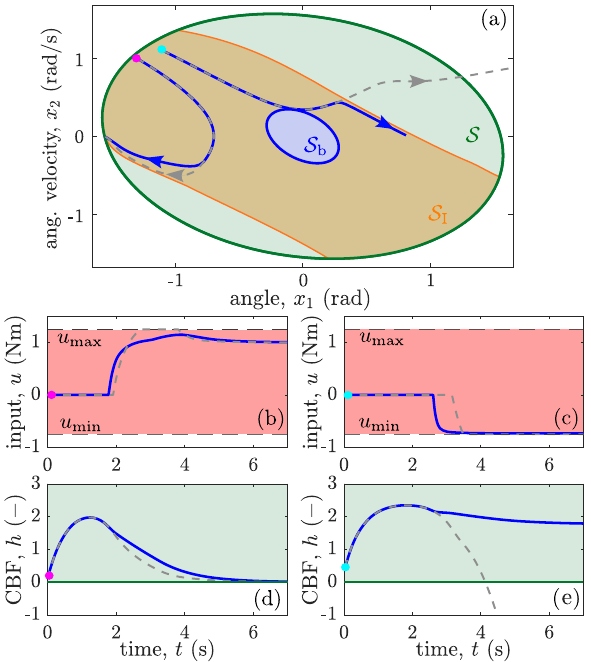}
\vspace{-2mm}
\caption{
Trajectories of the inverted pendulum~\eqref{eq:pend_din} using the backup CBF-QP~\eqref{eq:QP2} (solid blue) and the saturated counterpart of the CBF-QP~\eqref{eq:QP} (dashed gray)
for two different initial conditions (a), with the corresponding control inputs (b)-(c) and constraint functions (d)-(e).
The backup CBF-QP ensures safety with respect to the set $\SI$ (orange) that is within the constraint set $\s$ (green) while satisfying the input constraints.
The CBF-QP fails to ensure safety because the inputs are saturated to keep them within prescribed bounds.
}
\vspace{-4mm}
\label{fig:pendulum_res}
\end{figure}

Figure~\ref{fig:pendulum_res} demonstrates numerical simulation results for the inverted pendulum~\eqref{eq:pend_din} using the backup CBF-QP in~\eqref{eq:QP2} with parameters $T=5\,$s, $N_\mathrm{c}=51$, ${\alpha(h) = h}$, and ${\alpha_{\mathrm{b}}(h_{\rm b}) = h_{\rm b}}$.
The trajectories are computed for two different initial conditions which are marked with magenta and cyan dots.
The backup CBF-QP controller is compared with the traditional CBF-QP design in~\eqref{eq:QP} where, to ensure feasibility, the input constraint is omitted from the optimization and instead the input is saturated after solving the QP.
Figure~\ref{fig:pendulum_res}(a) shows the trajectories, while panels (b)-(c) present the inputs, and panels (d)-(e) depict the CBF values.
The backup CBF-QP keeps the trajectory inside the invariant set while satisfying the input constraints (see solid blue lines).
As opposed, the CBF-QP may violate safety due to the input saturation (dashed gray lines).
Panel (c) highlights the key difference between the two approaches: the backup CBF-QP responds earlier than the standard CBF approach thanks to the forward prediction of the backup flow.
Ultimately, this predictive, early response helps to maintain safety even in the presence of input constraints.
\end{example}



\section{Vehicle Braking on Asymmetric Surfaces}
\label{sec:splitmu}

We now proceed to apply the proposed method to safe vehicle braking on asymmetric (split-$\mu$) surfaces.
We focus on the realization of a safe vehicle controller which can prevent the vehicle from spinning out while the braking forces -- which are viewed as control inputs -- are kept feasible.

\subsection{Modeling}

Control design and numerical simulation require a suitable model.
This involves
vehicle, tire, and driver modeling.

\begin{figure}[t!]
\centerline{\includegraphics[width=0.5\textwidth]{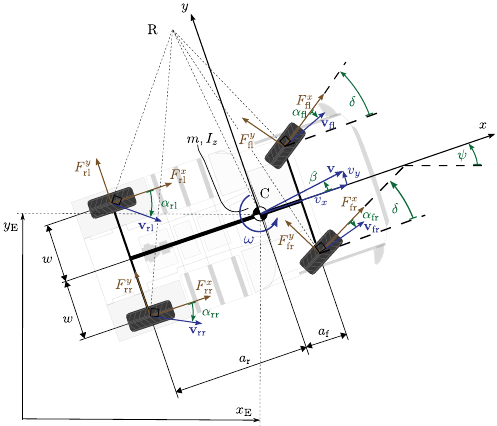}}
\vspace{-2mm}
\caption{Free-body diagram of the four-wheel planar vehicle model utilized for control design and numerical simulations.}
\vspace{-4mm}
\label{fig:FBD_vehicle}
\end{figure}

\subsubsection{Vehicle model}

To capture a braking on a split-$\mu$ surface, we use a four-wheel vehicle model \cite{jazar2025vehicle}, including left and right sides with different braking forces, and front and rear axles for considering front axle steerability.
The free-body diagram of the selected model is shown in Fig.~\ref{fig:FBD_vehicle}.
The vehicle has mass $m$ and mass moment of inertia $I_z$ about the vertical axis at the center of mass C.
The wheels are at distance $w$ from C in the lateral direction, while the front and rear axles are at distance $a_{\rm f}$ and $a_{\rm r}$ from C, respectively.
The vehicle's position in the Earth frame is given by $x_{\rm E}$ and $y_{\rm E}$, whereas the longitudinal and lateral directions are indicated by $x$ and $y$, respectively.
The yaw angle is $\psi$ and the yaw rate is $\omega$.
The velocity of the center of mass is $\mathbf{v}$, its longitudinal and lateral components are $v_x$ and $v_y$, and the side slip angle between the velocity vector and the longitudinal axis is $\beta$.
Each wheel is associated with a velocity vector $\mathbf{v}_{ij}$, a side slip angle $\alpha_{ij}$ between the velocity and the longitudinal axis of the wheels, as well as longitudinal forces $F_{ij}^{x}$ and lateral forces $F_{ij}^{y}$ acting on the wheels.
The index ${i \in \{{\rm f},{\rm r}\}}$ denotes the front or rear axle while the index ${j \in \{{\rm l},{\rm r}\}}$ refers to the left or right wheel.
We assume parallel steering with small angles, so the front wheels are at the steering angle $\delta$
relative to the longitudinal axis.

The kinematics of the vehicle are described by:
\begin{equation}
\begin{aligned}
    \dot{x}_{\rm E} & = v_x \cos\psi - v_x \tan \beta \sin\psi \,, \\
    \dot{y}_{\rm E} & = v_x \sin\psi + v_x \tan \beta  \cos\psi \,, \\
    \dot{\psi} & = \omega \,, \\
\end{aligned}
\end{equation}
which are used to calculate the trajectory of the vehicle during simulations.
The equations of motion can be derived using Lagrange's equations of the second kind \cite[Ch.~10.2]{jazar2025vehicle}:
\begin{equation}
\begin{split}
    \underbrace{\begin{bmatrix}
    \dot{v}_x\\
    \dot{\beta}\\
    \dot{\omega}
    \end{bmatrix}}_{\dot{\bx}}
    =\underbrace{
    \begin{bmatrix}
    f_v\\
    f_\beta\\
    f_\omega
    \end{bmatrix}}_{\mathbf{f}(\bx)}
    +\underbrace{\begin{bmatrix}
    \frac{\cos\delta}{m} \!&\! \frac{\cos\delta}{m} \!&\!\frac{1}{m} \!&\! \frac{1}{m}\\
    g_{1} \!&\!  g_{1} \!&\! g_{2} \!&\! g_{2}\\
    g_{3} \!&\! g_{4} \!&\! -\frac{w}{I_z} \!&\! \frac{w}{I_z}
    \end{bmatrix}}_{\mathbf{g}(\bx)}\underbrace{\begin{bmatrix}
        F_{\rm fl}^{x}\\
        F_{\rm fr}^{x}\\
        F_{\rm rl}^{x}\\
        F_{\rm rr}^{x}
    \end{bmatrix}}_{\bu}\,,
\end{split}
\label{eq:veh_sys}
\end{equation}
which are used for control design considering ${v_x>0}$ with:
\begin{equation}
\begin{aligned}
    f_v &= \omega v_x \tan\beta - \frac{\sin\delta}{m}\big(F_{\rm fl}^{y}+F_{\rm fr}^{y} \big) \,, \\
    f_\beta &= -\omega \!+\! \frac{\cos\beta}{m v_x} \big(
    (F_{\rm fl}^{y} \!+\! F_{\rm fr}^{y}) \cos(\delta\!-\!\beta) \!+\! (F_{\rm rl}^{y} \!+\! F_{\rm rr}^{y})\cos\beta \big) \,, \\
    f_\omega &= \frac{1}{I_z} \!\big(\!
    (F_{\rm fl}^{y} \!-\! F_{\rm fr}^{y}) w \sin\delta \!+\! (F_{\rm fl}^{y} \!+\! F_{\rm fr}^{y}) a_{\rm f} \cos\delta \!-\! (F_{\rm rl}^{y} \!+\! F_{\rm rr}^{y}) a_{\rm r} \big) , \\
    g_{1} & = \frac{\cos\beta}{m v_x}\sin(\delta-\beta) \,, \qquad\quad
    g_{2} = -\frac{\cos\beta}{mv_x}\sin\beta \,, \\
    g_{3} & = \frac{1}{I_z} (a_{\rm f} \sin\delta - w \cos\delta) \,, \quad
    g_{4} = \frac{1}{I_z} (a_{\rm f} \sin\delta + w \cos\delta) \,.
\end{aligned}
\end{equation}

\subsubsection{Tire model}

The tire model determines the forces acting on the wheels.
We consider the wheels in plane with forces at the wheel's contact point.
The longitudinal force takes the role of the control input.
For the lateral force, there exists a wide range of models in the literature, ranging from simple expressions to sophisticated models like Pacejka's magic formula \cite{pacejka2005tire}.
To keep the system control affine and to simplify the control design, we assume that the lateral force is not affected by the longitudinal force, and we use a linear tire model:
\begin{equation}
    F_{ij}^{y}=-C_{i} \alpha_{ij}\,,
\end{equation}
${i \in \{{\rm f},{\rm r}\}}$,  ${j \in \{{\rm l},{\rm r}\}}$, where
the wheel's cornering stiffness $C_{i}$ is considered to be the same for the left and right wheels.
The side slip angles $\alpha_{ij}$ are calculated based on kinematics \cite{jazar2025vehicle}:
\begin{equation}
    \alpha_{{\rm f}j} =
    \tan^{-1} \!\!\left(\! \frac{v_y \!+\! a_{\rm f}\omega}{v_x \!\mp\! w\omega} \!\right) \!-\! \delta\,, \;\;
    \alpha_{{\rm r}j} = \tan^{-1} \!\!\left(\! \frac{v_y \!-\! a_{\rm r}\omega}{v_x \!\mp\! w\omega} \!\right),
\end{equation}
with negative sign for left, positive for right, and ${v_y \!=\! v_x \tan\beta}$.

\subsubsection{Driver model}

The driver model assigns the steering angle $\delta$ based on the vehicle's position and orientation.
Note that the driver model is required only for numerical simulations and it is not used by the control design.
Instead, the braking controller may use steering angle $\delta$ measurements directly.
As the emphasis is not on the driver but on the controller, we choose a simple driver model on a straight road that
responds to the lateral position $y_{\rm E}$ and yaw angle $\psi$ \cite{voros2019lane}:
\begin{equation}
    \delta = -K_{y} y_{\rm E} - K_{\psi}\psi \,,
    \label{eq:ex3_driver}
\end{equation}
with driver parameters ${K_{y},K_{\psi}>0}$ that affect lateral stability.

\subsection{Safety-critical control design}

Next, we use the proposed method to design a safety-critical controller that determines the longitudinal forces $F_{ij}^{x}$ for the system~\eqref{eq:veh_sys}.
The controller seeks to prevent the vehicle from spinning out by keeping the side slip angle $\beta$ and yaw rate $\omega$ within safe bounds.
Thus, we define the constraint function:
\begin{equation}
    h(\bx)=1-\left(\frac{\beta}{\beta_{\mathrm{cr}}}\right)^2 -\left(\frac{\omega}{\omega_{\mathrm{cr}}}\right)^2\,,
    \label{eq:ex3_safeset}
\end{equation}
where $\beta_{\mathrm{cr}}$, $\omega_{\mathrm{cr}}$ are critical side slip angle and yaw rate values,
leading to an ellipse-shaped safe region in the $(\beta,\omega)$ plane.
We assume that the wheels cannot be accelerated during braking, so the longitudinal forces $F_{ij}^{x}$ cannot be positive, while there is limited friction, hence the braking forces cannot exceed a maximum value $\bar{F}_{ij}$.
This leads to the input constraints:
\begin{equation}
    -\bar{F}_{ij}\leq F_{ij}^{x}\leq 0\,.
    \label{eq:ex3_input_bounds}
\end{equation}
Importantly, the maximum braking force $\bar{F}_{ij}$ is different for the left and right wheels on asymmetric surfaces.
To minimize the stopping distance, the desired controller gives the maximum braking force,
${\kd(\bx)=[-\bar{F}_{\rm fl},\, -\bar{F}_{\rm fr},\, -\bar{F}_{\rm rl},\, -\bar{F}_{\rm rr}]^{\top}}$.


To ensure safety with input bounds, we construct a backup set-backup controller pair.
In this process, we view the steering angle $\delta$ as a parameter.
First, we choose output coordinates:
\begin{equation}
    \by(\bx)=\begin{bmatrix}
        v_x \\
        \omega
    \end{bmatrix}\,.
    \label{eq:ex3_y}
\end{equation}
Notice that we excluded the side slip angle $\beta$ from the output, because if ${\delta=\beta=0}$, the second row of $\bg(\bx)$ in \eqref{eq:veh_sys} is zero, thus ${L_\bg \beta(\bx) = \bzero}$, and $\beta$ does not have a valid relative degree.
As opposed, the output $\by$ has relative degree one.

Then, we establish a feedback linearization controller.
Based on Assumption~\ref{assum:output_reldeg}, the number of outputs and inputs must be the same.
Thus, we design only two inputs instead of four
and we assign the remaining two inputs proportionally:
\begin{equation}
    \begin{bmatrix}
        F_{\rm fl}^{x} \\
        F_{\rm fr}^{x}
    \end{bmatrix} = \hkb(\bx) \,, \quad
    \begin{bmatrix}
        F_{\rm rl}^{x} \\
        F_{\rm rr}^{x}
    \end{bmatrix} =
    \begin{bmatrix}
        r_{\rm l} F_{\rm fl}^{x}\\
        r_{\rm r} F_{\rm fr}^{x}
    \end{bmatrix}
    =\begin{bmatrix}
        r_{\rm l} \!&\! 0\\
        0 \!&\! r_{\rm r}
    \end{bmatrix}\hkb(\bx)\,,
\end{equation}
where ${r_j=\bar{F}_{{\rm r}j}/\bar{F}_{{\rm f}j}}$, ${j \in \{{\rm l},{\rm r}\}}$, are the ratios of the maximal rear and front braking forces on the left and right. This gives:
\begin{equation}
\begin{aligned}
    \dot{\bx}=\bf(\bx) + 
    \bg(\bx) \kb(\bx)
    =\bf(\bx) + \hat{\bg}(\bx) \hkb(\bx), \\
    \kb(\bx) =
    \begin{bmatrix}
    \bI \\ \mathbf{r}
    \end{bmatrix}
    \hkb(\bx) \,, \quad
    \hat{\bg}(\bx) =
    \bg(\bx)
    \begin{bmatrix}
    \bI \\ \mathbf{r}
    \end{bmatrix} \,,
    \label{eq:ex3_mod_dyn}
\end{aligned}
\end{equation}
where $\bI$ is the ${2 \times 2}$ identity matrix and ${\mathbf{r} = {\rm diag}\{r_{\rm l},r_{\rm r}\}}$.
As the output in \eqref{eq:ex3_y} has relative degree one, the backup controller can be computed using \eqref{eq:feedback_lin}:
\begin{equation}
    \hkb(\bx)=\mathrm{sat}\bigg(\!
    \begin{bmatrix}
    \frac{\cos\delta + r_{\rm l}}{m} \!&\! \frac{\cos\delta + r_{\rm r}}{m} \\
    g_{3} \!-\! \frac{r_{\rm l} w}{I_z} \!&\! g_{4} \!+\! \frac{r_{\rm r} w}{I_z}
    \end{bmatrix}^{-1}\!\!
    \begin{bmatrix}
    -f_v - a_x^* \\ 
    -f_\omega \!-\! K_\omega (\omega \!-\! \omega^*)
    \end{bmatrix} \!\bigg),
    \label{eq:ex3_kb_virt}
\end{equation}
where
$a_x^*$ is a desired deceleration to be selected later.
This yields the closed-loop output dynamics with the gain ${K_\omega>0}$:
\begin{equation} \label{eq:ex3_closedloop}
     \begin{bmatrix}
         \dot{v}_x\\
         \dot{\omega}
     \end{bmatrix}=
     \begin{bmatrix}
         -a_x^* \\
         -K_\omega (\omega-\omega^*)
     \end{bmatrix}\,.
\end{equation}


Next, we construct the backup set based on~\eqref{eq:hb_finalform}:
\begin{equation}
    \hb(\bx) = c- p_{\beta}\big(\beta-\beta^*\big)^2 - p_{\omega}(\omega-\omega^*)^2 \,.
\label{eq:ex3_mod_hb}
\end{equation}
This formula is a modified version of~\eqref{eq:hb_finalform}, as we incorporate the state $\beta$ instead of the output $v_x$ when constructing the backup set $\Sb$.
While this modification allows us to ensure that ${\Sb\subseteq\s}$, it may affect the forward invariance of $\Sb$ under $\kb$.
Therefore, forward invariance will be verified later for the appropriate choice of ${c>0}$, $\beta^*$, $\omega^*$, and ${p_{\beta},p_\omega>0}$.

\begin{figure}
\centering
\includegraphics[width=0.99\linewidth]{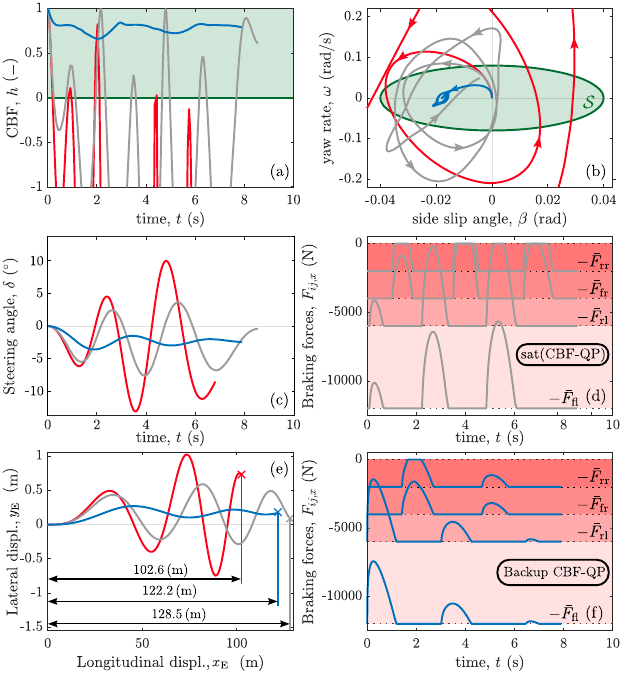}
\vspace{-7mm}
\caption{Simulation of the vehicle~\eqref{eq:veh_sys} braking on a split-$\mu$ surface using maximum braking forces (red), the saturated counterpart of the CBF-QP~\eqref{eq:QP} (gray), and the proposed backup CBF-QP~\eqref{eq:QP2} (blue). These methods are compared based on constraint function (a), trajectory in the $(\beta,\omega)$-plane (b), steering angle (c), lateral and longitudinal displacement (e), and braking forces (d)-(f). While the maximum braking forces minimize the stopping distance, they excessively violate safety.
Likewise, the CBF-QP is unsafe due to the saturation, and it increases the braking distance.
The proposed backup CBF-QP ensures safe input-bounded braking with a trade-off in stopping distance.} 
\vspace{-4mm}
\label{fig:splitmu_sim}
\end{figure}

Next, we select the parameters $c$, $p_\beta$, $p_\omega$, $\beta^*$, $\omega^*$, $a_x^*$, and $K_\omega$.
To select the equilibrium point $\beta^*$ and $\omega^*$, we observe that constant-speed rectilinear motion is described by ${\beta=0}$, ${\omega=0}$, ${\delta = 0}$, and zero braking force.
However, this cannot be chosen directly as equilibrium, because the driver may set nonzero $\delta$  and also the zero force violates~\eqref{eq:equilibrium_no_saturation} as it is equal to the maximum input.
Instead, we choose ${\omega^*=0}$ at the center of the constraint set and we seek for a nonzero $\beta^*$ that satisfies ${\dot{\beta}=0}$ in~\eqref{eq:veh_sys}.
For ${\omega=0}$, we have ${\alpha_{{\rm f}j} = \beta-\delta}$, ${\alpha_{{\rm r}j} = \beta}$, and with small angle approximations for $\beta$ and $\delta$ we obtain:
\begin{equation} \label{eq:small_angle_approx}
\begin{aligned}
    f_v &\approx 0 \,, \;\;
    f_\beta \approx \frac{-2}{m v_x} \big( C_{\rm f}(\beta-\delta) + C_{\rm r}\beta \big) \,, \;\;
    g_{1} \approx 0 \,, \;\;
    g_{2} \approx 0 \,, \\
    f_\omega &\approx -\frac{2}{I_z} \big( C_{\rm f} (\beta-\delta) a_{\rm f} - C_{\rm r} \beta a_{\rm r} \big) , \;\;
    g_{3} \approx -\frac{w}{I_z} \,, \;\;
    g_{4} \approx \frac{w}{I_z} \,.
\end{aligned}
\end{equation}
With this approximation, ${\dot{\beta}=0}$ yields the choice:
\begin{equation}
   \beta^*=\frac{C_{\rm f}}{C_{\rm f}+C_{\rm r}}\delta\,.
   \label{eq:ex3_beta_star}
\end{equation}
As $\beta^*$ depends on $\delta$, the backup set in \eqref{eq:ex3_mod_hb} moves in the state space if $\delta$ changes over time.
Since the center of the backup set must be inside the constraint set according to~\eqref{eq:equilibrium_safety}, this choice of $\beta^*$ is valid if ${-\beta_{\mathrm{cr}}<\beta^*<\beta_{\mathrm{cr}}}$.
Thus, our method is applicable in a narrower range of steering angles.

Now we determine the parameter $a_x^*$ that affects the evolution of the velocity $v_x$ in~\eqref{eq:ex3_closedloop}.
We notice that, while the backup set $\Sb$ must be forward invariant, the choice of $a_x^*$ does not need to render $v_x$ exponentially stable, because the definition~\eqref{eq:ex3_mod_hb} of $\Sb$ does not involve $v_x$.
Instead, we can just ensure ${\dot{v}_x \leq 0}$ for braking by choosing ${a_x^*\geq0}$.
We must select $a_x^*$ so that the braking forces in~\eqref{eq:ex3_kb_virt} do not saturate inside the backup set $\Sb$.
We choose $a_x^*$ such that the saturation curves at zero force move together with $\Sb$ as $\delta$ changes, by setting:
\begin{equation} \label{eq:saturation_betad}
\begin{aligned}
    k_{\mathrm{FL},1}(v_x,\beta^*+\beta_\mathrm{d},0)=0\,, \quad \mathrm{if}\, \delta\geq 0 \,, \\
    k_{\mathrm{FL},2}(v_x,\beta^*-\beta_\mathrm{d},0)=0\,, \quad \mathrm{if}\, \delta < 0 \,,
\end{aligned}
\end{equation}
where ${\beta_\mathrm{d}>0}$ is a parameter that represents a fixed distance between the center $\beta^*$ of the backup set and the saturation curves.
We solve~\eqref{eq:saturation_betad} with the approximations in~\eqref{eq:small_angle_approx} and ${\cos\delta \approx 1}$, and we derive the formula:
\begin{equation}
    a_x^*=\frac{2}{mw}\bigg(
    \frac{a_{\rm f}+a_{\rm r}}{\frac{1}{C_{\rm f}} + \frac{1}{C_{\rm r}}} |\delta|
    + (C_{\rm r}a_{\rm r}-C_{\rm f}a_{\rm f}) \beta_{\rm d}
    \bigg)\,,
    \label{eq:ex3_a_x}
\end{equation}
which is positive for any ${\beta_{\rm d}>0}$ if
${C_{\rm r} a_{\rm r} - C_{\rm f} a_{\rm f}>0}$.
This $a_x^*$ prevents the saturation of braking forces at the equilibrium point and ensures the satisfaction of~\eqref{eq:equilibrium_no_saturation} for any $\delta$ of interest.

\begingroup
\setlength{\tabcolsep}{4pt}
\begin{table}
\caption{Parameters of the split-$\mu$ braking simulation}
\vspace{-5mm}
\begin{center}
\begin{tabular}{|c | c | c| |c | c | c| |c | c | c|} 
\hline
Par. & Value & Unit &
Par. & Value & Unit &
Par. & Value & Unit \\
\hline
$m$ & $8850$ & kg &
$K_y$ & 0.2 & rad/m &
$\beta_{\mathrm{d}}$ & 0.016 & rad \\ 
$I_z$ & $36950$ & kgm$^2$ &
$K_{\psi}$ & 0.4 & 1 &
$p_{\beta}$ & $1$ & 1 \\
$w$ & $1.5$ & m &
$\beta_{\mathrm{cr}}$ & $0.04$ & rad & 
$K_\omega$ & $1$ & 1/s \\
$a_{\rm f}$ & $1.4$ & m &
$\omega_{\mathrm{cr}}$ & $0.08$ & rad/s & 
$c$ & $5\!\cdot\!10^{-5}$ & 1 \\
$a_{\rm r}$ & $1.6$ & m &
$\bar{F}_{\rm fl}$ & $12$ & kN &
$T$ & $0.1$ & s \\ 
$C_{\rm f}$ & $130$ & kN/rad &
$\bar{F}_{\rm fr}$ & $4$ & kN &
$N_{\mathrm{c}}$ & $200$ & 1 \\ 
$C_{\rm r}$ & $175$ & kN/rad &
$\bar{F}_{\rm rl}$ & $6$ & kN &
$\alpha(h)$ & $8h$ & 1/s \\
$v_{x,0}$ & $25$ &m/s &
$\bar{F}_{\rm rr}$ & $2$ & kN &
$\alpha_{\rm b}(h_{\rm b})$ & $25h_{\rm b}$ & 1/s \\
\hline
\end{tabular}
\end{center}
\vspace{-5mm}
\label{tab:parameters}
\end{table}
\endgroup

Finally, we choose
${p_\omega=1/(2 K_\omega)}$ based on the CTLE \eqref{eq:ctle} with ${\bQ=\bI}$.
Further, any ${c, p_\beta, K_\omega>0}$ may be chosen as long as the set $\Sb$ is forward invariant, ${\Sb\subseteq\s}$, and ${\Sb \subseteq \Sns}$.
These properties will be verified by plotting $\Sb$, $\s$, and $\Sns$. 

Having constructed the proposed backup controller~\eqref{eq:ex3_kb_virt} and backup set~\eqref{eq:ex3_mod_hb} offline, we use the backup CBF-QP in~\eqref{eq:QP2} to calculate a safe and feasible control input online.
This includes the forward prediction of the vehicle's motion under the backup controller.
The prediction depends on the driver's future behavior through the future values of the steering angle $\delta$.
Since these values are unknown, we calculate the prediction by assuming that the steering angle remains constant over the prediction horizon, using the current value of $\delta$.
Accordingly, the parameters $\beta^*$ in~\eqref{eq:ex3_beta_star} and $a_x^*$ in~\eqref{eq:ex3_a_x}, that depend on the steering angle $\delta$, are also considered to be constants over the horizon, and we neglect their time derivatives ($\dot{\beta}^*\approx 0$, $\dot{a}_x^*\approx 0$) in $\dot{h}$ and $\dot{h}_{\rm b}$ during the implementation of~\eqref{eq:QP2}.



\subsection{Simulation results}

\begin{figure*}
\centering
\includegraphics[width=1\linewidth]{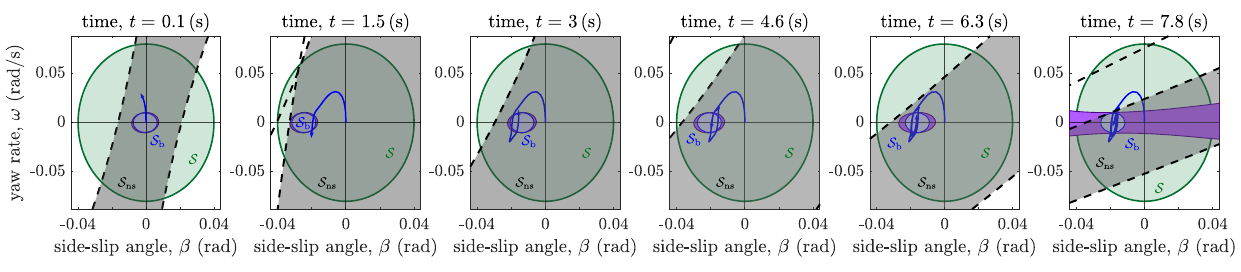}
\vspace{-7mm}
\caption{Evolution of the backup set $\Sb$ (blue) during split-$\mu$ braking relative to the simulated trajectory (blue line), constraint set $\s$ (green), no-saturation set $\Sns$ (gray), and the region where ${\dot{h}_{\mathrm{b}} \geq-\alpha_{\mathrm{b}} (h_{\mathrm{b}})}$ (purple).
The steering angle changes according to the driver model~\eqref{eq:ex3_driver}, and the backup set moves along the horizontal axis.
Throughout the motion, $\Sb$ remains inside $\s$, $\Sns$ and the purple area, indicating that it is a valid, forward invariant backup set.
}
\vspace{-4mm}
\label{fig:splitmu_sets}
\end{figure*}

Figure~\ref{fig:splitmu_sim} depicts numerical simulations of the split-$\mu$ braking scenario considering a cabover truck (see Fig.~\ref{fig:musplitvehicle}) with the parameters listed in Table~\ref{tab:parameters}.
At the beginning, we assume that the truck drives onto the split-$\mu$ surface with initial conditions ${v_x(0)=v_{x,0}}$, ${\beta(0)=0}$, and ${\omega(0) = 0}$, exerting maximal braking forces given by $\kd(\bx)$.
We consider that the calculated braking forces are applied instantly at the wheels, which can be realized by ABS in practice.
The driver parameters $K_y$ and $K_\psi$ are selected so that initially at the highest speed they yield unstable behavior based on linear stability analysis~\cite{voros2019lane}, and they gradually become stable as $v_x$ decreases, imitating a realistic driver.
Therefore, the intervention of the safety-critical controller is crucial at high speeds for maintaining bounded lateral motions.
The constraint set parameters $(\beta_{\rm cr},\omega_{\rm cr})$ in~\eqref{eq:ex3_safeset} are chosen based on the results in~\cite{mirzaeinejad2014optimization}. 
We select the parameters $(\beta_{\rm d}, p_\beta, K_{\omega}, c)$ so that they yield a valid backup set-backup controller pair, which will be demonstrated in Fig.~\ref{fig:splitmu_sets}. We choose a relatively short prediction horizon $T$, since the accuracy of the forward prediction may deteriorate over long horizons under the assumption of constant steering angle $\delta$.  

Figure~\ref{fig:splitmu_sim} compares three braking strategies:
(i) the desired controller $\kd$ with maximal braking forces (red lines) to minimize the stopping distance (also called {\em select high} method);
(ii) the CBF-QP design~\eqref{eq:QP} without incorporating input constraints into the optimization and instead saturating the controller (gray lines); and
(iii) the backup CBF-QP~\eqref{eq:QP2} with the proposed backup set-backup controller pair (blue lines).
Figure~\ref{fig:splitmu_sim} plots the constraint function in panel (a), the trajectories in the ($\beta$,$\omega$) plane in panel (b), the steering angles set by the driver in panel (c), and the vehicle's position in panel (e).
Panels (d) and (f) present the braking forces, where dotted lines show the maximal values.
While the select high strategy achieves the shortest possible stopping distance, it results in excessive lateral motions.
In comparison, the saturated CBF-QP reduces the side slip angle and the yaw rate at the price of a longer stopping distance, but it still violates the safety constraints due to the saturation.
The proposed backup CBF-QP not only keeps the system safe and simultaneously satisfies the input bounds, but also yields the smallest steering angles and lateral displacement indicating a smoother and more stable vehicle response.
The corresponding stopping distance lies between those of the other two controllers, achieving a trade-off between feasible safety and short stopping distance.
Notice that, similar to Example~\ref{ex:pendulum}, the backup CBF-QP intervenes earlier than the saturated CBF-QP thanks to forward prediction.

Figure~\ref{fig:splitmu_sets} shows the backup set in~\eqref{eq:ex3_mod_hb} at different time instants of the simulation.
While the backup set $\Sb$ (blue) moves along the horizontal axis as the steering angle $\delta$ changes, it remains within both the constraint set $\s$ (green) and the no-saturation region $\Sns$ (gray).
Note that, since there are two independent backup control inputs (given by $\hat{k}_{\mathrm{b},1}$ and $\hat{k}_{\mathrm{b},2}$), the set $\Sns$ has four saturation bounds, two of which follow the change of $\delta$ by keeping a fixed distance $\beta_{\rm d}$ from the center $\beta^*$ of the backup set according to~\eqref{eq:saturation_betad}.
Furthermore, we visualize the set of points $\bx$ where ${\dot{h}_{\rm b}(\bx,\mathbf{k}(x)) \geq-\alpha_{\rm b}\big(h_{\rm b}(\bx)\big)}$ holds (purple).
Because this set encapsulates the backup set at all times, we conclude that the backup set is indeed forward invariant and it is valid as desired.


\label{sec:sim}

\section{Conclusions}
\label{sec:concl}

In this paper, we developed a safety-critical control framework for road vehicles to ensure bounded lateral motions with minimal stopping distance during split-$\mu$ braking.
The fact that braking forces are limited by friction, and that vehicles cannot be accelerated during heavy braking, led to input constraints in addition to lateral safety constraints.
We employed the backup set method to synthesize controllers that guarantee the satisfaction of these constraints.
We introduced a novel, systematic approach to generating the required valid backup set-backup controller pairs, using a combination of feedback linearization and continuous time Lyapunov equations, along with meticulously chosen parameters.
We demonstrated our method on simple examples and highlighted its advantages over traditional CBF formulations.
Finally, we adapted the method to split-$\mu$ braking and showcased that the proposed controller achieves safety with limited braking forces while maintaining a short stopping distance. 

\bibliographystyle{IEEEtran}
\bibliography{limited_braking}	

\newpage
\vspace{-10mm}
\begin{IEEEbiography}[{\includegraphics[width=1in,height=1.25in,clip,keepaspectratio]{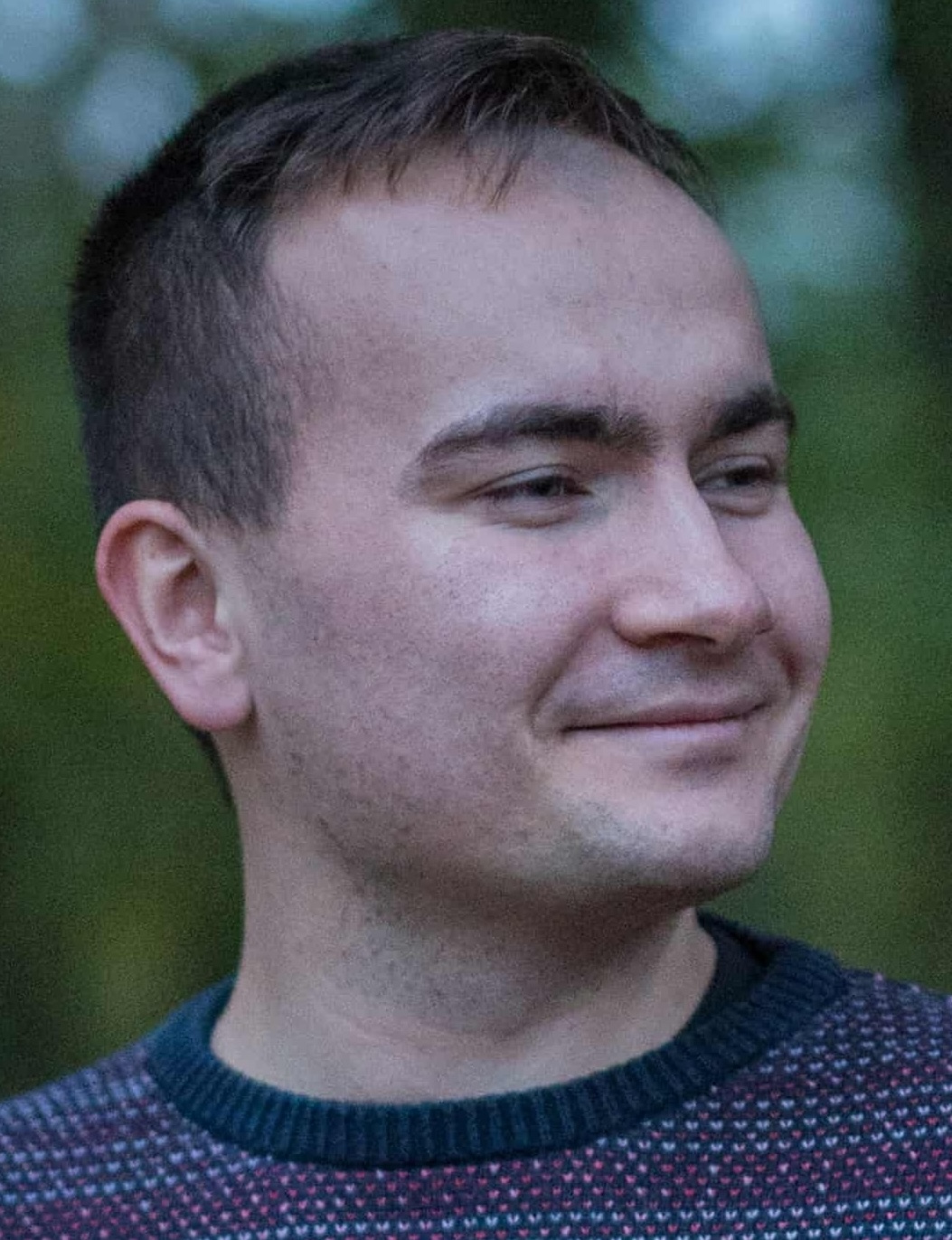}}]
 {Laszlo Gacsi} received the B.Sc. and M.Sc. degrees
 in mechanical engineering from the Budapest University of Technology and Economics,
 Budapest, Hungary, in 2022 and 2024, and he is
 currently pursuing the Ph.D. degree at the Wichita State University. His research interests include nonlinear dynamics and control, safety-critical control with applications to autonomous vehicles.
 \end{IEEEbiography}

 \vspace{-120mm}
 \begin{IEEEbiography}[{\includegraphics[width=1in,height=1.25in,clip,keepaspectratio]{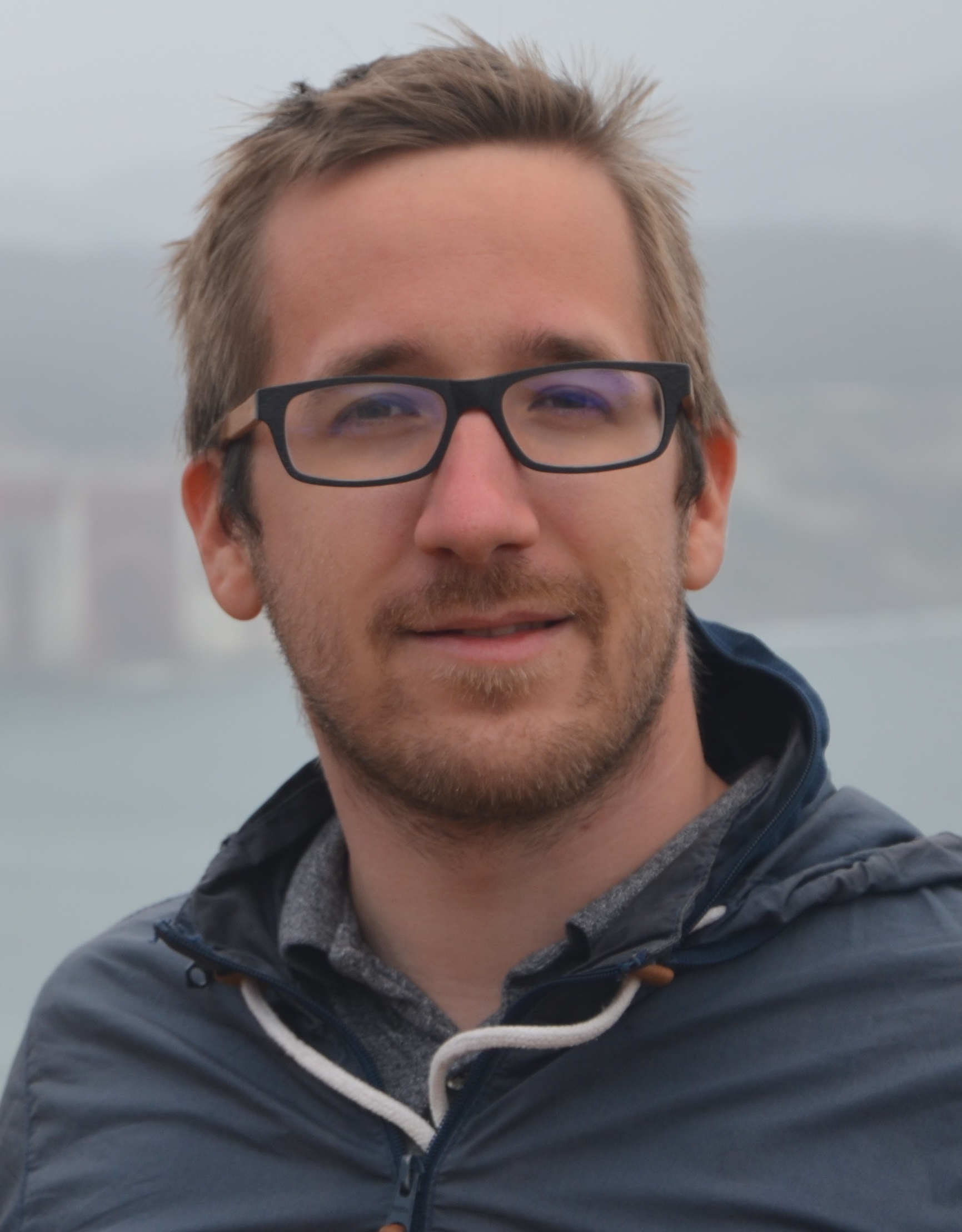}}]
 {Adam K. Kiss} received his B.Sc., M.Sc. and Ph.D. degrees in mechanical engineering from the Budapest University of Technology and Economics (BME) in 2013, 2015 and 2023.
 He is currently a researcher at the HUN-REN–BME Dynamics of Machines Research Group. He has been involved in EU-funded projects on machine tool vibrations. His current research interests include nonlinear dynamics, safety-critical control and time delay systems with applications to machine tool vibrations and connected automated vehicles.
 \end{IEEEbiography}

\vspace{-120mm}
 \begin{IEEEbiography}[{\includegraphics[width=1in,height=1.25in,clip,keepaspectratio]{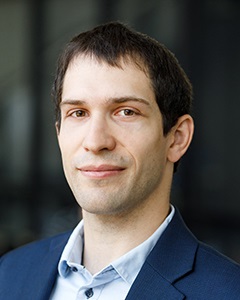}}]
 {Tamas G. Molnar} is an Assistant Professor of Mechanical Engineering at the Wichita State University since 2023. Beforehand, he held postdoctoral positions at the California Institute of Technology, from 2020 to 2023, and at the University of Michigan, Ann Arbor, from 2018 to 2020. He received his PhD and MSc in Mechanical Engineering and his BSc in Mechatronics Engineering from the Budapest University of Technology and Economics, Hungary, in 2018, 2015, and 2013. His research interests include nonlinear dynamics and control, safety-critical control, and time delay systems with applications to connected automated vehicles, robotic systems, and autonomous systems.
 \end{IEEEbiography}

\end{document}